\documentclass[10pt]{article}

\usepackage{amsfonts}
\usepackage{amsmath}
\usepackage{amssymb}
\usepackage{amsthm}
\usepackage{latexsym}
\usepackage{mathtools}
\usepackage{mathrsfs}
\usepackage{mathpazo}
\usepackage[bb=dsserif]{mathalpha}
\usepackage{bm}

\usepackage{soul}
\usepackage{comment}
\usepackage{xargs}

\usepackage[pdftex,dvipsnames]{xcolor}
\usepackage[
    colorinlistoftodos,
    prependcaption,
    textsize=tiny
]{todonotes}

\usepackage[margin=1in]{geometry}
\usepackage[colorlinks=true, linkcolor=blue, urlcolor=blue, citecolor=blue]{hyperref}
\usepackage{aliascnt}
\usepackage{cleveref}
\usepackage{array}
\usepackage{cite}

\usepackage{authblk}

\usepackage{tikz}
\usetikzlibrary{calc,positioning}
\usepackage{mdframed}

\hypersetup{pdfpagemode=UseNone}
\newcommand{\tikzrefsize}[1]{\ifmmode{\scriptscriptstyle{#1}}\else{\tiny{#1}}\fi}
\newtheorem{theorem}{Theorem}

\newaliascnt{lemma}{theorem}
\newtheorem{lemma}[lemma]{Lemma}
\aliascntresetthe{lemma}
\crefname{lemma}{Lemma}{Lemmas}

\newaliascnt{prop}{theorem}
\newtheorem{prop}[prop]{Proposition}
\aliascntresetthe{prop}
\crefname{prop}{Proposition}{Propositions}

\newaliascnt{corollary}{theorem}

\aliascntresetthe{corollary}
\crefname{corollary}{Corollary}{Corollaries}

\theoremstyle{definition}

\newaliascnt{definition}{theorem}

\aliascntresetthe{definition}
\crefname{definition}{Definition}{Definitions}

\newaliascnt{remark}{theorem}
\newtheorem{remark}[remark]{Remark}
\aliascntresetthe{remark}
\crefname{remark}{Remark}{Remarks}

\newaliascnt{example}{theorem}

\aliascntresetthe{example}
\crefname{example}{Example}{Examples}

\mdfdefinestyle{figstyle}{
    linecolor=black!7,
    backgroundcolor=black!7,
    innertopmargin=10pt,
    innerleftmargin=25pt,
    innerrightmargin=25pt,
    innerbottommargin=10pt
}

\definecolor{White}{rgb}{1,1,1}
\definecolor{Black}{rgb}{0,0,0}
\definecolor{LightGray}{rgb}{.81,.81,.81}
\definecolor{Ali}{rgb}{0.68, 0.18, 0.08}

\colorlet{ChannelColor}{LightGray}
\colorlet{ChannelTextColor}{Black}
\colorlet{ReadoutColor}{White}

\newcommand{\eqdef}{\overset{\mathrm{def}}{=}}
\DeclareMathOperator{\tr}{Tr}
\DeclareMathOperator{\Tr}{Tr}
\DeclareMathOperator{\supp}{supp}
\DeclareMathOperator{\id}{id}

\DeclareMathOperator{\bounded}{\mathcal{B}}
\DeclareMathOperator{\wt}{\mathsf{wt}}
\DeclareMathOperator{\Span}{span}

\newcommand{\pauli}[2]{\mathsf{\MakeUppercase{#1}}_{#2}}
\newcommand{\NP}{\mathsf{NP}}
\newcommand{\QMA}{\mathsf{QMA}}

\NewDocumentCommand{\Pn}{o}{
  \mathcal{P}_{\!n\IfValueT{#1}{,#1}}
}
\newcommand{\Pcc}{\mathscr{P}_n}

\newcounter{propitem}
\newcommand{\Prop}[1]{\textup{\textbf{(\ref{#1})}}}
\renewcommand{\thepropitem}{\textbf{P\arabic{propitem}}}
\newcommand{\pitem}{%
    \refstepcounter{propitem}%
    \item[\textup{\textbf{(\thepropitem)}}]%
}

\newcommand{\complex}{\mathbb{C}}
\newcommand{\real}{\mathbb{R}}

\newcommand{\I}{\mathbf{1}}
\renewcommand{\natural}{\mathbb{N}}

\newcommand{\X}{\mathbb{X}}
\newcommand{\Y}{\mathbb{Y}}
\newcommand{\Z}{\mathbb{Z}}
\newcommand{\Iq}{\mathbb{1}}

\newcommand{\Ac}{\mathcal{A}}
\newcommand{\Bc}{\mathcal{B}}

\newcommand{\Ec}{\mathcal{E}}

\newcommand{\Hc}{\mathcal{H}}
\newcommand{\Kc}{\mathcal{K}}

\newcommand{\Pc}{\mathcal{P}}

\newcommand{\Xc}{\mathcal{X}}

\newcommand{\Pb}{\bm{\mathrm{P}}}
\newcommand{\Qb}{\bm{\mathrm{Q}}}

\newcommand{\ib}{\bm{i}}

\newcommand{\blambda}{\bm{\lambda}}
\newcommand{\bbeta}{\bm{\eta}}

\newcommand{\abs}[1]{\left|#1\right|}
\newcommand{\norm}[1]{\left\|#1\right\|}

\newcommand{\ceil}[1]{\left\lceil #1 \right\rceil}

\newcommand{\inter}[2]{\left[#1:#2 \right]}

\newcommand{\ket}[1]{\left| #1 \right\rangle}
\newcommand{\bra}[1]{\left\langle #1 \right|}
\newcommand{\ketbra}[2]{\left|#1\middle\rangle\!\middle\langle#2\right|}

\newcommand{\choi}[1]{J_{#1}}
\renewcommand{\vec}[1]{\lvert #1 \rangle\!\!\rangle}
\newcommand{\cev}[1]{\langle\!\!\langle #1 \rvert}

\newcommand{\cevvec}[2]{\langle\!\!\langle #1 \!\mid\! #2 \rangle\!\!\rangle}

\newcommandx{\unsure}[2][1=]{%
    \todo[
        linecolor=red,
        backgroundcolor=red!25,
        bordercolor=red,
        #1
    ]{#2}%
}

\newcommandx{\change}[2][1=]{%
    \todo[
        linecolor=blue,
        backgroundcolor=blue!25,
        bordercolor=blue,
        #1
    ]{#2}%
}

\newcommandx{\info}[2][1=]{%
    \todo[
        linecolor=OliveGreen,
        backgroundcolor=OliveGreen!25,
        bordercolor=OliveGreen,
        #1
    ]{#2}%
}

\newcommandx{\improvement}[2][1=]{%
    \todo[
        linecolor=Plum,
        backgroundcolor=Plum!25,
        bordercolor=Plum,
        #1
    ]{#2}%
}

\newcommandx{\thiswillnotshow}[2][1=]{%
    \todo[disable,#1]{#2}%
}

\makeatletter
\let\@fnsymbol\@arabic
\makeatother

\newcommand\blfootnote[1]{%
  \begingroup
  \renewcommand\thefootnote{}\footnote{#1}%
  \addtocounter{footnote}{-1}%
  \endgroup
}

\begin{document}

\title{\LARGE\bf Convergence rates of Sum-of-Hermitian-Squares Hierarchies for the Pauli algebra}

\author{Ali Almasi}
\author{D\'avid Bug\'ar}
\author{Cambyse Rouz\'e}
\author{Peter Brown}
  
\affil{{Télécom Paris, Inria, LTCI, Institut Polytechnique de Paris, 19 Place Marguerite Perey, 91120 Palaiseau, France}}

\date{\today}

\renewcommand\Affilfont{\normalsize\itshape}
\renewcommand\Authfont{\large}
\setlength{\affilsep}{6mm}
\renewcommand\Authand{\rule{10mm}{0mm}}

\maketitle

\begin{abstract}
  Moment/Sum-of-Hermitian-Squares relaxations for noncommutative polynomial optimization problems have become an important tool for analysing problems within quantum theory. Despite their widespread success, little is known about their rate of convergence and, consequently, their accuracy. In this work, we develop explicit convergence rates for relaxations of noncommutative polynomial optimization problems generated from the Pauli algebra -- covering applications to the ground state energy problem for $n$-qubit systems. In particular, we show that the rate of convergence can be bounded in terms of the smallest roots of a family of orthogonal polynomials known as Krawtchouk polynomials. Our result represents the first quantitative analysis of the rate of convergence for relaxations of noncommutative polynomial optimization problems. 
\end{abstract}

%------------------------------------------------------------------------------%
\section{Introduction}
\blfootnote{During the final preparation of the manuscript we became aware of a related work by Klep \textit{et al}, titled ``Quantitative semidefinite certificates for ground-state energies of Pauli Hamiltonians''~\cite{klep2026quantitative}, which obtains similar results. The two works were carried out independently.}
Given a Hamiltonian $H$ defined on a system of $n$-qubits, the \emph{ground state energy problem} asks to compute the minimum eigenvalue of $H$, which can equivalently be formulated as
\begin{equation}
\label{eq:gs_intro}
    H_{\min} \eqdef 
    \sup
    \left\{
        \lambda \in \real : H - \lambda \I \succeq 0
    \right\},
\end{equation}
where 
$\I$
is the identity operator on 
$n$ qubits. Estimating the ground state energy of a quantum system is a fundamental task in quantum theory. In condensed matter physics, determining the ground state energy is used to study quantum phase transition phenomena at zero temperature \cite{sachdevQuantumPhaseTransitions2011} and in quantum chemistry, estimates of the lowest energy of systems can be used to predict molecular stability, reaction rates and other chemical properties \cite{szaboModernQuantumChemistry1996}.
One groundbreaking result in quantum complexity theory is that for general $d$-local Hamiltonians with $d \ge 2$, approximating the ground state energy within an inverse polynomial precision is complete for the Quantum-Merlin-Arthur ($\QMA$) complexity class \cite{kitaevClassicalQuantumComputation2002, kempeComplexityLocalHamiltonian2006}, and similar hardness results have been established for several restricted classes of local Hamiltonians \cite{kayQuantumMerlinArthurcompleteTranslationallyInvariant2007, oliveiraComplexityQuantumSpin2008, aharonovPowerQuantumSystems2009a, huangTwodimensionalLocalHamiltonian2021}, including the model used for defining the quantum max-cut problem \cite{cubittComplexityClassificationLocal2016, piddockComplexityAntiferromagneticInteractions2017}, which has recently attracted significant interest \cite{gharibianAlmostOptimalClassical2019, anshu_et_alBeyondProductStateApproximations,BravyiApproximationAlgorithmsForQuantumMany-bodyProblems, hallgren_et_alAnApproximationAlgorithmForTheMAX-2-LocalHamiltonianProblem, parekh_et_al:BeatingRandomAssignmentForApproximatingQuantum2-LocalHamiltonianProblems,
parekhApplicationLevel2Quantum2021,
parekh2026optimalproductstateapproximation2local,
takahashi2026su2symmetricsemidefiniteprogramminghierarchy,
King2023improved,
Watts2024relaxationsexact,
huberSecondOrderConeRelaxationsForQuantumMaxCut,
gribling2025improvedapproximationratiosquantum,
ju_et_al:ImprovedApproximationAlgorithmsForTheEPRHamiltonian,
apte_et_al:ImprovedAlgorithmsForQuantumMaxCutViaPartiallyEntangledMatchings,
huber2025lovaszthetalowerbound
}.

Despite these hardness results, a range of methods have been introduced to approximate the ground state energy \cite{verstraeteProjectedEntangledStates2006,verstraeteMatrixProductStates2008,austinQuantumMonteCarlo2012,langeArchitecturesApplicationsReview2024}. One such method is based on formulating the ground state problem as an instance of a class of optimization problems known as \emph{noncommutative polynomial optimization} (NPO) problems
\cite{navascuesConvergentHierarchySemidefinite2008,pironioConvergentRelaxationsPolynomial2010}, which generalize \emph{polynomial optimization} in the commutative setting \cite{lasserreGlobalOptimizationPolynomials2001, parriloStructuredSemidefinitePrograms2000}. NPO problems take a general form
as
\begin{equation}
    \label{eq:intro_npo}
\begin{aligned}
\inf_{\mathcal{H},\, X,\, \ket{\psi}} \quad & \bra{\psi} f(X) \ket{\psi} \\
\text{s.t.} \quad & p_i(X) \succeq 0, \quad i = 1, \dots, m_p,\\
& q_j(X) \ket{\psi} = 0, \quad j = 1, \dots, m_q,\\
& \bra{\psi} r_k(X) \ket{\psi} \geq 0, \quad k=1, \dots, m_r,
\end{aligned}
\end{equation}
where 
$f$,
$p_i$,
$q_j$
and
$r_k$ 
are polynomials in the non-commuting variables 
$(X_1, \dots, X_n)$, with 
$f$, $p_i$ and $r_k$ additionally being Hermitian.
The optimization is over all separable Hilbert spaces
$\Hc$,
all tuples
$(X_1, \dots, X_n)$
of bounded operators on $\Hc$
and all unit vectors
$\ket{\psi}$ in 
$\Hc$. 

From the fact that the ground state energy problem is an instance of an NPO problem, we know that solving NPO problems in general is at least $\QMA$-hard
\cite{kitaevClassicalQuantumComputation2002},
and even their commutative counterparts are known to be $\NP$-hard \cite{laurentSumsSquaresMoment2009}.
Despite this hardness, a general technique, known as the \emph{sum-of-Hermitian-squares} (SOHS) hierarchy \cite{heltonPositivstellensatzNoncommutativePolynomials2004,pironioConvergentRelaxationsPolynomial2010}, has been developed to systematically obtain a non-decreasing sequence of lower bounds to the optimal value of Problem \eqref{eq:intro_npo}. These lower bounds can also be shown to converge under mild conditions. The SOHS hierarchy is in fact a non-commutative extension of the \emph{sum-of-squares} (SOS) hierarchy for commutative polynomial optimization \cite{lasserreGlobalOptimizationPolynomials2001,lasserreExplicitExactSDP2001,parriloStructuredSemidefinitePrograms2000}. Each level of the SOHS hierarchy consists of a \emph{semidefinite programming} (SDP) relaxation of Problem \eqref{eq:intro_npo}. Although the size of the SDPs grows rapidly as the level increases, the technique has nevertheless found many applications to problems in quantum theory \cite{tavakoliSemidefiniteProgrammingRelaxations2024} beyond the ground state energy problem.

For the NPO formulation of the ground state energy problem, the SOHS hierarchy has been exploited to obtain lower bounds on the ground state energy of many-body quantum systems \cite{nakataVariationalCalculationsFermion2001, 
pironioConvergentRelaxationsPolynomial2010,
baumgratzLowerBoundsGround2012, barthelSolvingCondensedMatterGroundState2012,hanQuantumManybodyBootstrap2020, mazziottiQuantumManyBodyTheory2023, wangCertifyingGroundStateProperties2024, wangScalableGroundStateCertification2026}. Additionally, when designing approximation algorithms for the quantum max-cut problem, solving low levels of the SOHS hierarchy for the ground state energy problem is often used as a subroutine \cite{brandaoProductstateApproximationsQuantum2013,
gharibianAlmostOptimalClassical2019,
parekhApplicationLevel2Quantum2021
}.

For an $n$-qubit Hamiltonian 
$H$,
let 
$d \in \natural$ 
be the maximum Pauli weight of the Pauli strings appearing in the expansion of 
$H$
in the Pauli basis.
For the NPO formulation of the ground state energy of 
$H$
and for
any positive integer
$\ceil{d/2} \le r \le n$, 
the
$r$-th level of the
SOHS hierarchy can be formulated as follows 
\begin{equation}
    \label{eq:intro-sohs}
\begin{aligned}
    H_r  \eqdef \sup & \quad \lambda \\
    \mathrm{s.t.}& \quad H - \lambda \I = \sum_{i} A_i^\dagger A_i \\
    & \quad A_i \in 
    \Span
    (\Pn[\le r])
    \qquad \forall i,
\end{aligned}
\end{equation}
where 
$\Pn[\le r]$ denotes the set of Pauli strings of length $n$ with Pauli weight at most $r$. In effect, by constructing an SOHS polynomial $\sum_i A_i^\dagger A_i$ built from Pauli strings of degree no larger than $r$, we are guaranteeing the positivity of $H- \lambda \I$ (and consequently that $H_{\min} \geq \lambda$). The critical observation of the SOHS hierarchy is that the construction of an SOHS polynomial from Pauli operators of weight at most $r$ can be equivalently formulated as the positive semidefiniteness of some $|\Pn[\le r] | \times |\Pn[\le r] |$ matrix, which turns the problem of computing $H_r$ into an SDP. The size of the SDP grows with $r$ and one has that 
$H_r \le H_{r+1} \le H_{\min}$ for all 
$r \le n$. 
Moreover, the sequence 
$(H_r)_{r \in [n]}$
converges to
$H_{\min}$
in finitely many steps. In particular, for $r=n$,
$H_n = H_{\min}$.

Although the hierarchy converges at a finite level, no rate of convergence for 
$(H_r)_{r \in [n]}$
is known outside this regime. In fact, no rate of convergence results exist for the SOHS hierarchy for any NPO problem.
By contrast, for the SOS hierarchy in commutative polynomial optimization, convergence rate bounds are known in several cases \cite{
nieComplexityPutinarsPositivstellensatz2007, 
kordaConvergenceRatesMomentsumofsquares2017,
kordaConvergenceRatesMomentsumofsquares2018,
fangSumofsquaresHierarchySphere2021,
slotSumofSquaresHierarchiesPolynomial2022,
laurentEffectiveVersionSchmudgens2023,
baldiEffectivePutinarsPositivstellensatz2023,
baldiDegreeBoundsPutinars2024,
schlosserConvergenceRatesMomentSoS2024,
schlosserSpecializedEffectivePositivstellensatze2024,
baldiLojasiewiczInequalitiesEffective2025,
kordaConvergenceRatesSumsofsquares2025,
magronConvergenceRatesPolynomial2025}, including polynomial optimization over the binary cube \cite{slotSumofsquaresHierarchiesBinary2023}, which is equivalent to the ground state energy problem for classical spin systems. We refer the reader to \cite{laurentOverviewConvergenceRates2026} for a review of several existing SOS convergence rate results. 

\subsection{Results}
\label{sec:results}
 In this work, we prove a convergence rate for the sequence 
 $(H_r)_{r \in [n]}$ given by the SOHS hierarchy \eqref{eq:intro-sohs}. This result constitutes, to the best of our knowledge, the first convergence rate of the SOHS hierarchy for any family of NPO problems. The following theorem is our main result.

 \begin{theorem}
    \label{thm:main}
    For fixed $d \leq n$,
    let 
    $H \in \Span(\Pn[\le d])$ 
    be an 
    $n$-qubit Hamiltonian, whose ground state energy is denoted by 
    $H_{\min}$.
    For any positive integer 
    $r < n$,
    let 
    $H_r$ be the bound given by the SOHS hierarchy, defined in \Cref{eq:intro-sohs}, and let 
    $\zeta_{n,r}$ denote the smallest root of the Krawtchouk polynomial $\Kc_{r,4}^n$ defined in \Cref{eq:krawtchouk_definition}. Then, if 
    \begin{equation*}
         d(d+1) (\zeta_{n,r+1} / n )\leq 1/2,
    \end{equation*}
    we have
    \begin{equation}
    \label{eq:main_thm_bound}
        H_{\min} - H_r \leq 2\gamma_d d(d+1) \norm{H} \left( \frac{\zeta_{n, r+1}}{n} \right),
    \end{equation}
    where 
     $\gamma_d \le (1 + \sqrt{2})^d$
    is a constant depending only on 
    $d$ 
    and
    $\|H\|$ is the operator norm of $H$. 
\end{theorem}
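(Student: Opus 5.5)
We adapt the polynomial kernel method used for the binary cube (Slot--Laurent) to the Pauli algebra. Everything is organized around the quantum Hamming scheme: the superoperator algebra of $n$-qubit channels that are covariant under the Pauli group and under permutations of qubits and local Clifford symmetries.

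\textbf{Strategy.} The plan is to find, for each $r$, a unital, completely positive, Pauli-diagonal map $\Phi$ with two properties. First, it acts on a Pauli string $P$ of weight $w$ by a scalar multiplier $\Phi(P)=\lambda_w P$ with $\lambda_0=1$. Second, it maps every positive operator into the SOHS cone of level $r$. Apply $\Phi$ to $H-H_{\min}\I\succeq 0$. Since $H$ only has components of weight $\le d$, we get
\[
\Phi(H)-H_{\min}\I=\sum_{w\le d}\lambda_w H^{(w)}-H_{\min}\I,
\]
where $H^{(w)}$ is the weight-$w$ part of $H$. This is a sum of Hermitian squares of degree $\le r$. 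Writing $H=\Phi(H)+\sum_{w\le d}(1-\lambda_w)H^{(w)}$ then gives
\[
H-\bigl(H_{\min}-\textstyle\sum_w |1-\lambda_w|\,\|H^{(w)}\|\bigr)\I \in \text{SOHS}_r .
\]
This holds because each bounded Hermitian term $c\I\pm H^{(w)}$ is itself a square of degree $\le \lceil w/2\rceil\le r$ (it is $d$-local), which we may assume after enlarging $r\ge d$. Two ingredients remain: the bound $\|H^{(w)}\|\le\gamma_d\|H\|$, and the bound $\sum_{w\le d}|1-\lambda_w|\le 2d(d+1)\zeta_{n,r+1}/n$.

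\textbf{Step 1: the weight projection bound.} To show $\|H^{(w)}\|\le \gamma_d\|H\|$, first note that the weight-$w$ projection is a linear combination of depolarizing channels $\Delta_t^{\otimes n}$, with $\Delta_t(P)=t^{[P\neq \Iq]}P$. Choose $d+1$ interpolation nodes $t$ and invert the resulting Vandermonde system in the variable $t$ with polynomial degree $\le d$. The $\ell_1$-norm of the coefficients is then controlled, e.g. by Chebyshev-type nodes, giving the constant $(1+\sqrt2)^d$. Since each depolarizing channel is a contraction in operator norm, this yields $\gamma_d$.

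\textbf{Step 2: the kernel construction.} Take $\Phi(X)=\sum_{P,Q}c_{P,Q}\,A\,X\,A^\dagger$-type forms. Concretely, let $\Phi(X)=\mathbb{E}_{\text{Pauli }R}\,K R X R^\dagger K^\dagger$, or better $\Phi(X)=\sum_{Q}\,\mathrm{tr}(\cdot)$-weighted, built from an operator $K=\sum_{P\in\Pn[\le r]}u(\wt P)\,P\otimes\bar P$ acting on a doubled system. Then each output is a Hermitian square of operators supported on weight-$\le r$ Pauli strings. Positivity is immediate from the square structure, and Pauli covariance forces the diagonal action. The multiplier $\lambda_w$ equals a quadratic form $\sum_{i,j}u_iu_j\,(\ldots)$ in the radial profile $u$. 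This form is diagonalized by Krawtchouk polynomials $\Kc^n_{k,4}$, which are the eigenvalues of the quantum Hamming scheme with alphabet size $4$. Hence $\lambda_w=\sum_k \mu_k\,\Kc^n_{k,4}(w)/\Kc^n_{k,4}(0)$, where $\mu=u^2$ convolved in the scheme, i.e. $\lambda_w=p(w)/p(0)$ for $p=q^2$ and $\deg q\le r$.

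\textbf{Step 3: the extremal choice.} As in the classical case, choose $q(x)=\Kc^n_{r+1,4}(x)/(x-\zeta_{n,r+1})$. Via the Christoffel--Darboux/Gauss quadrature argument, this maximizes $\lambda_1$, giving $1-\lambda_1 \lesssim \zeta_{n,r+1}/n$. Then propagate to $w\le d$ by a Markov-type inequality on the (log-concave, nonnegative on $[0,\zeta]$) profile, obtaining $1-\lambda_w\le w(w+1)\zeta_{n,r+1}/n$. The hypothesis $d(d+1)\zeta/n\le 1/2$ is used to guarantee $\lambda_w\ge 1/2>0$, which absorbs the normalization. A normalizing factor $\le 2$ appears as well, giving the stated constant.

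\textbf{Main obstacle.} The hardest step is Step 2: realizing every scalar Krawtchouk-diagonal multiplier of the form $p=q^2$ as a genuinely completely positive map whose outputs are degree-$r$ SOHS. This requires checking that the quantum (4-ary) Krawtchouk addition formula behaves like the classical one for operator products.
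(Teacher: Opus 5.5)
\textbf{There is a genuine gap, in your opening reduction.}

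After applying $\Phi$ to $H-H_{\min}\I$ you are left with the remainder $\sum_{w\le d}(1-\lambda_w)H^{(w)}$. You therefore need $\sum_w|1-\lambda_w|\,\|H^{(w)}\|\,\I+\sum_w(1-\lambda_w)H^{(w)}$ to be a sum of Hermitian squares of degree $\le r$.

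Your justification is that $\|H^{(w)}\|\I\pm H^{(w)}$ is a Hermitian square of degree $\le\lceil w/2\rceil$ because it is ``$d$-local''. This is false. $H^{(w)}$ is a sum of $w$-local terms spread over all $n$ qubits, and positivity of $\|H^{(w)}\|\I\pm H^{(w)}$ gives no low-degree certificate. That is exactly the difficulty the hierarchy faces.

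For a concrete counterexample, take $A=\sum_{1\le i<j\le 3}(\pauli{x}{i}\pauli{x}{j}+\pauli{y}{i}\pauli{y}{j}+\pauli{z}{i}\pauli{z}{j})$, which has spectrum $\{3,-3\}$. Write $\sigma^\alpha_i$ for $\pauli{x}{i},\pauli{y}{i},\pauli{z}{i}$. Consider level-$1$ pseudo-moments with vanishing one-body terms and $\langle\sigma^\alpha_i\sigma^\beta_j\rangle=\delta_{\alpha\beta}G_{ij}$, where $G$ is the Gram matrix of three unit vectors at $120^\circ$. These are feasible and give $\langle A\rangle=-\tfrac92$, so $3\I+A$ is not a degree-$1$ SOHS.

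Allowing degree $r$ instead does not help. Requiring level $r$ to certify $\pm H^{(w)}\succeq -C\|H^{(w)}\|\I$ is a quantitative statement about the hierarchy of the same type as the theorem, so the argument becomes circular. The only cheap certificate, $\sum_{\Pb}|h_{\Pb}|(\I\pm\operatorname{sgn}(h_{\Pb})\Pb)$, controls the $\ell_1$-norm of the Pauli coefficients. That norm can exceed $\gamma_d\|H\|$ by an $n$-dependent factor.

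\textbf{The missing idea} is to absorb the error \emph{before} applying $\Phi$, using that $\Phi$ is invertible on weights $\le d$. Set $\Phi^{-1}(H)=\sum_{w\le d}\lambda_w^{-1}H^{(w)}$ and $\epsilon=\sum_{w\le d}|\lambda_w^{-1}-1|\,\|H^{(w)}\|$. Then $Y=(\Phi^{-1}(H)-H+\epsilon\I)+(H-H_{\min}\I)$ only needs to be PSD, which it is. Since $\Phi$ sends PSD operators to SOHS of degree $2r$, $\Phi(Y)=H-(H_{\min}-\epsilon)\I$ is such an SOHS.

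This is why the paper needs $\lambda_w\ge\tfrac12$ and controls $\sum_{k\le d}|\lambda_k^{-1}-1|\le 2d(d+1)\zeta_{n,r+1}/n$ rather than $\sum_w|1-\lambda_w|$. The factor $2$ is not a ``normalization''; it comes from $\lambda^{-1}-1\le 2(1-\lambda)$ for $\lambda\ge\frac12$.

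\textbf{Two smaller corrections.}

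First, consider the map with Choi matrix $2^{-n}u(D)^2$, i.e.\ the Pauli channel $X\mapsto 4^{-n}\sum_{\Qb}u^2(\wt\Qb)\,\Qb X\Qb$. Its multipliers are the Krawtchouk coefficients of $u^2=\sum_k\lambda_k\Kc^n_{k,4}$, not $u^2(w)/u^2(0)$.

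Second, the step you flag as the main obstacle is routine. If the Choi matrix is $R^\dagger R$ with $R=\sum R_{\Pb,\Qb}\Pb\otimes\Qb$ and $\wt\Qb\le r$, then $\Phi(X)=\Tr_1[A^\dagger A]$ with $A=R(\sqrt{X^\top}\otimes\I)$. This is a sum of terms $A_{k\ell}^\dagger A_{k\ell}$ with $A_{k\ell}\in\Span(\Pn[\le r])$.

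Your Step 1 (depolarizing contractions plus the Chebyshev coefficient bound) agrees with the paper.
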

It is known from the theory of orthogonal polynomials that for a fixed $n$,  
the sequence 
$(\zeta_{n,r})_{r \in \inter{0}{n}}$
is decreasing \cite[Theorem~3.3.2]{szegoOrthogonalPolynomials1939}, and therefore, our bound in  \Cref{eq:main_thm_bound} is decreasing in 
$r$.
This bound can be made more explicit when we consider a particular asymptotic regime, where 
$n \to \infty$ and $r \approx tn$
for 
$t \in [0,3/4]$. 
In this case,
it is known \cite[Equation~128]{levenshteinKrawtchoukPolynomialsUniversal1995} that 
\[
\lim_{r/n \to t} \frac{\zeta_{n, r}}{n} = 
\varphi(t)
\eqdef
\frac{3 - 2t - 2\sqrt{3t(1-t)}}{4}.
\]
Here, by 
$\lim_{r/n \to t}$,
one means that the convergence holds for any sequence 
$(r_i)_{i \in \natural}$
and
$(n_i)_{i \in \natural}$
such that
\[
n_i \xrightarrow{i\to\infty} \infty
\quad \text{and}
\quad
r_i/n_i \xrightarrow{i\to\infty} t.
\]

Fix $d \in \natural$.
To make the dependence on the number of qubits explicit, for 
$n \geq d$,
let 
$H^{(n)} \in \Pn[\le d] $
denote an 
$n$-qubit 
Hamiltonian,
and let 
$H^{(n)}_{\min}$ 
and  
$H^{(n)}_{r}$
denote, respectively, the ground state energy and the estimate obtained from the 
$r$th level of the SOHS hierarchy.
Combining the above-mentioned asymptotic behavior of the extremal roots of the Krawtchouk polynomials with 
\Cref{thm:main},
we see that if 
\[d(d+1) \varphi(t) < 1/2 ,\]
then
\[
\limsup_{r/n \to t} \frac{H_{\min}^{(n)} - H_r^{(n)}}{\norm{H^{(n)}}} \le  2\gamma_d d(d+1) \varphi(t), 
\]
where 
$\gamma_d$
is the same constant as in 
\Cref{eq:main_thm_bound}.  
From this, we see that for every 
$\epsilon > 0$,
one can choose
$t_{\epsilon} \in [0,3/4]$
such that for all sufficiently large 
$n$ and for
$r \approx t_{\epsilon} n$,
\[
\frac{H_{\min}^{(n)} - H_{r}^{(n)}}{\norm{H^{(n)}}} \le \epsilon.
\]
In fact, as 
$\varphi$ is decreasing and self-inverse, it suffices to take $t_{\epsilon} > \varphi(\frac{\epsilon}{2\gamma_d d(d+1)})$.

\subsection{Outline of the proof technique}
\label{sec:proof_outline}
 We prove \Cref{thm:main} by extending
 the polynomial kernel method~\cite{fangSumofsquaresHierarchySphere2021},
 which was introduced to prove convergence rates of the SOS hierarchy for commutative polynomial optimization problems, to a noncommutative setting. Our proof technique follows a similar structure to the proof of the SOS convergence bounds in \cite{slotSumofsquaresHierarchiesBinary2023} 
 for polynomial optimization over the 
$q$-ary hypercube. 
We refer the interested reader to \cite{laurentOverviewConvergenceRates2026} for an account of this method in the commutative setting and we now briefly overview our extension.

Recall that level $r$ of the SOHS hierarchy asks for 
\begin{equation*}
\begin{aligned}
    H_r \eqdef \sup & \quad \lambda \\
    \mathrm{s.t.}& \quad H - \lambda \I = \sum_{i} A_i^\dagger A_i \\
    & \quad A_i \in 
    \Span
    (\Pn[\le r])
    \qquad \forall i.
\end{aligned}
\end{equation*}
Assume that for some 
$\epsilon(r)$, 
one can show that
$H - (H_{\min} - \epsilon(r))\I$
can be written as an SOHS of degree at most $2r$, i.e., there exist
$t \in \natural$
and 
$S_i \in 
\Span
    (\Pn[\le r] )
$
for 
$i \in [t]$, such that
\begin{equation}
    \label{eq:kernel_method_assumption}
H - \left(H_{\min}-\epsilon(r)\right) \I = \sum_{i=1}^t S_i^{\dagger} S_i.
\end{equation}
Then from the definition of 
$H_r$,
one has
\begin{equation*}
    H_r \ge H_{\min}-\epsilon(r),
\end{equation*}
or equivalently, 
\begin{equation*}
    \label{eq:kernel_method_conclusion}
    H_{\min} - H_r \leq \epsilon(r).
\end{equation*}
Therefore, to show that a function 
$\epsilon(r)$
upper bounds the error of level $r$ of the hierarchy,
it is enough to show that \Cref{eq:kernel_method_assumption}
holds for 
$\epsilon(r)$. 

To this end, assume that there exists a unital linear map
$\Phi: \bounded(\complex^{2^n}) \to \bounded(\complex^{2^n})$,
whose inverse 
$\Phi^{-1}$
is well-defined on 
$\Span \big( \supp(H) \cup \{ \I \}  \big)$,
where 
$\supp(H)$
denotes the set of Pauli strings appearing in the expansion of 
$H$
in the Pauli basis.
Moreover, assume that 
$\Phi^{-1}(H)$
is Hermitian and 
is 
$\epsilon(r)$-close to 
$H$ with respect to the operator norm, i.e.
\[\|\Phi^{-1}(H) - H\| \le \epsilon(r).\]
From this, it follows that
\[
\Phi^{-1}\big( H - (H_{\min} - \epsilon(r))\I \big)
= 
\big( \Phi^{-1}(H) - H + \epsilon(r) \I \big) + \big( H - H_{\min} \I \big) \succeq 0.
\]

Now, assume further that the linear map 
$\Phi$
has the extra property that it maps positive semidefinite operators to SOHS of degree at most 
$2r$, i.e., for any
$X \in \bounded(\complex^{2^n})$ such that 
$X \succeq 0$,
there exist 
$m \in \natural$
and 
$A_1,\dots,A_m \in \Span(\Pn[\le r])$ 
such that 
$\Phi(X) = \sum_{i=1}^m A_i^\dagger A_i$.
Then, by the above line, we have 
\[
H - (H_{\min} - \epsilon(r))\I = \Phi\big(
\Phi^{-1}\big( H - (H_{\min} - \epsilon(r))\I \big)
\big) = \sum_{i=1}^m A_i^\dagger A_i
\]
where the first equality follows from invertibility and the second equality follows from the SOHS property of $\Phi$. This shows that to prove
$
H_{\min} - H_r \le \epsilon(r)
$,
it suffices to find a linear map 
$\Phi$
that satisfies the assumptions discussed above.  

The main technical work of the proof is therefore the construction of such a linear map $\Phi$.  In Section~\ref{sec:kernel-const} we consider maps of the form
\begin{equation}
\label{eq:map_form_outline}
    \Phi_{\bbeta} = \sum_{k=0}^{n} \eta_k \Ec_k,
\end{equation}
where  
$\bbeta = (\eta_0,\dots, \eta_n) \in \real^{n+1}$
and
$\Ec_k: \bounded(\complex^{2^n}) \to \bounded(\complex^{2^n})$
is defined as
\[
\Ec_k(X) \eqdef 2^{-n} \sum_{\Pb \in \Pn[k]} \tr(\Pb X) \Pb,
\]
with $\Pn[k]$ denoting the set of Pauli strings of weight $k$. 
We show that for any map
$\Phi_{\bbeta}$
of the form 
\eqref{eq:map_form_outline},
its corresponding Choi operator
$\choi{\Phi_{\bbeta}}$
admits the closed form
\begin{equation*}
    \label{eq:closed_form_intro}
    \choi{\Phi_{\bbeta}} = 2^{-n} q(D),
\end{equation*}
where 
$q \in \real[x]$
is a univariate polynomial related to 
$\bbeta = (\eta_0,\dots, \eta_{n})$ through certain Krawtchouk polynomials~\cite{Krawtchouk1929},
and 
$D \in \bounded(\complex^{2^n} \otimes \complex^{2^n})$
is a specific Hermitian operator (see \Cref{eq:D_definition}). This characterization allows us to derive sufficient conditions on the polynomial $q$,
ensuring that the associated linear map 
$\Phi$
satisfies the above requirements. In Section~\ref{sec:analysis_of_the_kernel}, we apply the results of 
\cite{slotSumofsquaresHierarchiesBinary2023} to
show the existence of such a polynomial $q$, yielding a linear map satisfying the required conditions with 
\[\epsilon(r) = 2\gamma_d d(d+1) \norm{H} \frac{\zeta_{n, r+1}}{n}.\]

\section{Preliminaries}
\subsection{Notations}
We denote the set of non-negative integers by 
$\natural_{0}
\eqdef \{0\} \cup \natural$.
The imaginary unit is denoted by the boldface letter $\ib$. 
For 
$a \le b \in \natural_{0}$, 
we write 
$\inter{a}{b}$ 
for the set 
$\{a,a+1,\ldots,b\}$, 
and 
$[n] \eqdef \inter{1}{n}$ 
for 
$n \in \natural$.
For a set 
$S$, 
the cardinality of 
$S$ 
is denoted by 
$\lvert S\rvert$. 
For 
$s \in \real$ 
and 
$r \in \mathbb{Z}$, 
the binomial coefficient 
$\binom{s}{r}$ 
is defined as
\begin{equation*}
    \binom{s}{r} \eqdef 
        \begin{cases}
        \frac{s(s-1)\cdots (s-r+1)}{r!} & \text{if $r \in \natural$}, \\
        1 & \text{if $r=0$}, \\
        0 & \text{otherwise}.	
        \end{cases}
\end{equation*}
We denote by 
$\delta_{x,y}$ the Kronecker delta, which is equal to $1$ when 
$x$ and $y$ are equal, and is $0$ otherwise.

For a vector space 
$V$ 
and a subset 
$S \subseteq V$, 
we write 
$\Span(S)$ 
for the linear span of the elements in
$S$.   
For a Hilbert space 
$\Hc$,  
the algebra of bounded operators  
$X: \Hc \to \Hc$ 
is denoted by 
$\bounded(\Hc)$, and the real vector space of 
bounded self-adjoint operators on $\Hc$ is denoted by
$\bounded^{\mathrm{sa}}(\Hc)$.
The operator norm of 
$X \in \bounded(\Hc)$ 
is denoted by 
$\norm{X}$. 
We denote the partial trace of an operator 
$X \in \bounded(\Hc_1 \otimes \Hc_2)$ 
over 
$\Hc_1$ by $\tr_1(X) \in \bounded(\Hc_2)$. 
An element 
$X \in \bounded(\Hc)$ 
is called \emph{positive semidefinite}, denoted by 
$X \succeq 0$, 
if 
$X = Y^\dagger Y$ 
for some 
$Y \in \bounded(\Hc)$. 
For 
$X, Y \in \bounded(\Hc)$, 
we write 
$X \succeq Y$ 
if 
$X - Y \succeq 0$.

Single-qubit Pauli operators are denoted by 
$\Iq, \X, \Y, \Z$.
We denote the set of 
$n$-qubit 
Pauli strings by 
$\Pn \eqdef 
\{
    \mathrm{P}_{\!1}\otimes\cdots\otimes \mathrm{P}_{\!n} : 
    \mathrm{P}_i
    \in
    \{\Iq,\X,\Y,\Z\}
\}$ and we denote elements of this set by boldface uppercase letters, e.g., 
$\Pb = \mathrm{P}_{\!1}\otimes\cdots\otimes \mathrm{P}_{\!n}$. 
The identity operator on 
$\complex^{2^n}$ 
is denoted by 
$\I = \Iq^{\otimes n}$.  
For 
$\Pb \in \Pn$, 
the \emph{weight} of 
$\Pb$, 
denoted by 
$\wt(\Pb)$, 
is defined as the number of indices 
$i \in [n]$ 
such that 
$\mathrm{P}_{\!i} \neq \Iq$. 
For 
$k \in \inter{0}{n}$, 
we write 
$\Pn[k]$ 
and 
$\Pn[\le k]$ 
for the subsets of 
$\Pn$ 
consisting of Pauli strings of weight exactly $k$ and at most $k$, respectively. 

The identity map 
$\bounded(\complex^{2^n}) \to \bounded(\complex^{2^n})$ 
is denoted by 
$\id_n$, 
and we often omit the subscript 
$n$ 
when it is clear from the context. 
For a linear map 
$\Phi: \bounded(\complex^{2^n}) \to \bounded(\complex^{2^n})$, 
we denote by $\choi{\Phi}$
the \emph{Choi matrix} of 
$\Phi$ 
with respect to the computational basis
$\{\ket{i}\}_{i \in \{0,1\}^n}$, defined as 
\begin{equation}
    \label{eq:choi_definition}
        \choi{\Phi} \eqdef \sum_{i,j \in \{0,1\}^n} \ketbra{i}{j} \otimes \Phi(\ketbra{i}{j}) \in \bounded(\complex^{2^n} \otimes \complex^{2^n}).
\end{equation}
One can recover the action of the map 
$\Phi$
from its Choi matrix, as
\begin{equation}
\label{eq:choi_inversion}
    \Phi(X) = \tr_1 \bigl[
    \choi{\Phi} (X^{\top} \otimes \I)
    \bigr]
\end{equation}
for all 
$X \in \bounded(\complex^{2^n})$, where $X^{\top}$ denotes the transpose of $X$.
For any linear operator 
$X \in \Bc(\complex^{2^n})$, 
we define the \emph{Choi state of 
$X$}, 
denoted by 
$\vec{X}$,
as
\begin{equation}
    \label{eq:vectorization_definition}
    \vec{X} 
    \eqdef 
    \sum_{i\in \{0,1\}^n} 
    \ket{i} 
    \otimes 
    X \ket{i} 
    \in 
    \complex^{2^n} \otimes \complex^{2^n}.
\end{equation}
We denote 
$\vec{X}^\dagger$ 
by 
$\cev{X}$, 
and the inner product 
$\cev{X}\vec{Y}$ 
is denoted by 
$\cevvec{X}{Y}$.

For 
$2 \le q \in \natural$,
$n \in \natural$ 
and
$k \in \natural_0$, 
the \emph{$q$-ary Krawtchouk polynomial of degree 
$k$ with parameter $n$} is defined as 
\begin{equation}
    \label{eq:krawtchouk_definition}
    \mathcal{K}_{k,q}^n(x) := \sum_{i=0}^k (-1)^i (q-1)^{k-i} \binom{x}{i} \binom{n-x}{k-i}.
\end{equation}
In \Cref{tab:krawtchouk_identities}, we list three identities for these polynomials, which will be used later. The reader can find proofs of these identities in \cite{levenshteinKrawtchoukPolynomialsUniversal1995}.
In particular, from \Cref{eq:krawtchouk_orthogonality} in \Cref{tab:krawtchouk_identities}, one can see that
for any 
$n \in \natural$,
Krawtchouk polynomials 
$\{\Kc_{k,q}^n\}_{k \in \inter{0}{n}}$ 
form an orthogonal basis for the space of real univariate polynomials of degree at most $n$,
with respect to the discrete measure
\[
\mu(x) = q^{-n} \sum_{t=0}^n (q-1)^{t} \binom{n}{t} \delta_{x,t},
\]
on 
$\inter{0}{n}$.
Furthermore, note that by the Lagrange interpolation theorem, any function $f: \inter{0}{n} \to \real$, over 
$\inter{0}{n}$, is equal to a unique polynomial of degree at most 
$n$. Therefore, for any 
$p \in \real[x]$,
there exist
$(\alpha_0, \alpha_1, \dots, \alpha_{n}) \in \real^{n+1}$,
such that
\[
p(x) = \sum_{k = 0}^n \alpha_k \Kc_{k,q}^n (x), \quad \text{for } x \in \inter{0}{n}.
\]

\begin{table}[h]
\centering
\renewcommand{\arraystretch}{0.4}

\begin{tabular}{|>{\raggedright\arraybackslash}m{0.7\textwidth}|}
\hline

\fbox{\textbf{Orthogonality relation I}}\quad
\emph{For
$2 \le q \in \natural$, $n \in \natural$, and $k, k' \in \inter{0}{n}$,}
\begin{equation}
\label{eq:krawtchouk_orthogonality}
    \sum_{x=0}^n 
    \mathcal{K}_{k,q}^n(x)\,
    \mathcal{K}_{k',q}^n(x)\,
    q^{-n} (q-1)^x \binom{n}{x}
    = (q-1)^k \binom{n}{k} \delta_{k,k'}.
\end{equation}
\\ \hline

\fbox{\textbf{Orthogonality relation II}}\quad
\emph{For
$2 \le q \in \natural$, $n \in \natural$, and $k', k'' \in \inter{0}{n}$,}
\begin{equation}
    \label{eq:krawtchouk_orthogonality_implied}
    \sum_{k=0}^n \Kc_{k',q}^n(k) \Kc_{k,q}^n(k'') = q^n \delta_{k',k''}.
\end{equation}
\\ \hline

\fbox{\textbf{A special case of \Cref{eq:krawtchouk_orthogonality_implied}}}\quad
\emph{For
$2 \le q \in \natural$, $n \in \natural$, $x \in \inter{0}{n}$,}
\begin{equation} \label{eq:krawtchouk_orthogonality_special}
    \sum_{k=0}^n \Kc_{k,q}^n(x) 
    = q^n \delta_{0,x}.
\end{equation}
\\ \hline

\end{tabular}

\caption{Identities for Krawtchouk polynomials used in this work.}
\label{tab:krawtchouk_identities}
\end{table}

\subsection{Noncommutative polynomial optimization}
In this section, we explain how the ground state energy problem can be formulated as a non-commutative polynomial optimization (NPO) problem, and how the sum-of-Hermitian-squares (SOHS) hierarchy for this formulation provides a sequence of lower bounds on the ground state energy. In this part, we specialize our review of the NPO framework and the SOHS hierarchy to the setting of the Pauli algebra, as this is the setting relevant to our work. The interested reader is referred to \cite{pironioConvergentRelaxationsPolynomial2010} for a general introduction to the NPO framework and the SOHS hierarchy.
\subsubsection{Noncommutative polynomial optimization over the Pauli algebra}
Consider the set of non-commuting variables
$\Xc_n \eqdef \bigcup_{i=1}^n \{\pauli{x}{i}, \pauli{y}{i}, \pauli{z}{i}\}$.
Let 
$\langle \Xc_n \rangle$
denote the set of all finite words formed from the letters in 
$\Xc_n$ 
by concatenation, with the empty word denoted by 
$1$. 
We denote the set of all complex polynomials in these non-commuting variables by
$\mathbb{C}\langle \Xc_n \rangle$.
The \emph{degree} of a monomial in 
$\mathbb{C}\langle \Xc_n \rangle$
is defined as the length of the corresponding word, and the degree of a polynomial is defined as the maximum degree of monomials appearing in the polynomial with non-zero coefficients.
The ring 
$\mathbb{C}\langle \Xc_n \rangle$ 
is equipped with an involution 
$p \mapsto p^*$, 
that is a conjugate-linear map fixing the letters in 
$\Xc_n$ 
and reversing the order of words, making 
$\mathbb{C}\langle \Xc_n \rangle$ 
a unital 
$*$-algebra.
A polynomial 
$p \in \mathbb{C}\langle \Xc_n \rangle$ 
is called \emph{Hermitian} if 
$p^* = p$. 

In this paper, we consider the Pauli algebra 
$\Pcc$, 
which is the algebra of complex non-commutative polynomials in variables 
$\Xc_n$ subject to the equalities\footnote{
Formally, 
$\Pcc = \mathbb{C}\langle \Xc_n \rangle / I_n$, 
where 
$I_n$ 
is the two-sided ideal generated by the polynomials corresponding to the equalities in \Cref{eq:pauli_algebra_relations}.
}
\begin{equation}
    \label{eq:pauli_algebra_relations}
    \begin{aligned}
        &(\pauli{x}{i})^2 = (\pauli{y}{i})^2 = (\pauli{z}{i})^2 = 1, \quad \text{for } i \in [n], \\
        &\pauli{x}{i}\pauli{y}{i} - \ib \pauli{z}{i} =  
        \pauli{y}{i}\pauli{z}{i} - \ib \pauli{x}{i} =
        \pauli{z}{i}\pauli{x}{i} - \ib \pauli{y}{i} = 0, \quad \text{for } i \in [n], \\
        & \pauli{A}{i} \pauli{B}{j} - \pauli{B}{j} \pauli{A}{i} = 0, \quad \text{for } i,j \in [n],\ i\neq j,\ \mathsf{A}, \mathsf{B} \in \{\mathsf{X},\mathsf{Y},\mathsf{Z}\}.
    \end{aligned}
\end{equation}
Let $f \in \Pcc$ be a Hermitian polynomial in the Pauli algebra.
The non-commutative polynomial optimization problem with objective $f$ over the Pauli algebra $\Pcc$ is the optimization problem defined as
\begin{equation}
    \label{eq:NPO_general}
    \begin{aligned}
        f_{\min} 
        \eqdef
        \inf_{
                \tau, \ket{\psi}
        }
        \bra{\psi} \tau(f) \ket{\psi},
    \end{aligned}
\end{equation}
where the infimum is taken over all $*$-representations
$\tau: \Pcc \to \bounded(\Hc)$
on arbitrary separable Hilbert spaces $\Hc$, and all unit vectors $\ket{\psi} \in \Hc$.
This optimization problem can be equivalently expressed as
\begin{equation}
\label{eq:NPO_min_eigenvalue}
\begin{aligned}
f_{\min} 
= \sup \left\{ \lambda \in \real : \;\right.
& \tau(f) - \lambda \I_\Hc \succeq 0, \text{for all $*$-representations }\tau: \Pcc \to \bounded(\Hc)\\
& \text{ on arbitrary separable Hilbert spaces $\Hc$} \left.\right\}.
\end{aligned}
\end{equation}
The \emph{sum-of-Hermitian-squares (SOHS) hierarchy} for the NPO problem in \Cref{eq:NPO_general} is a family of relaxations obtained from replacing the condition that 
$\tau(f) - \lambda \I_\Hc \succeq 0$ 
for all 
$*$-representations 
$\tau$ by the stronger condition that 
$f - \lambda 1$ can be expressed as a sum of Hermitian squares of a certain degree on 
$\Pcc$. 
For 
$r \in \natural$, 
the 
$r$-th level of the SOHS hierarchy is defined as
\begin{equation}
    \label{eq:SOHS_hierarchy}
    \begin{aligned}
      f_{r} \eqdef \sup \{ \lambda \in \real : & f - \lambda 1 = \sum_i g_i^* g_i
    \text{ modulo the equalities in \eqref{eq:pauli_algebra_relations},} \\
    &\text{where }
    g_i \in \complex \langle \Xc_n \rangle   \text{ and } \deg(g_i) \le r \text{ for all $i$} \}.
    \end{aligned}
\end{equation}
Clearly, one has $f_{r} \le f_{r+1} \le f_{\min}$ for all $r \in \natural$. Moreover, $f_r$ can be computed by solving a semidefinite program. 
\subsubsection{Ground state energy as an NPO problem}
To write the ground state problem \eqref{eq:gs_intro} as an NPO problem, we use the following well-known lemma. For completeness, we provide a proof of this lemma in Appendix~\ref{appndx:pauli_algebra_representation}.
\begin{lemma}
    \label{lem:Pauli-algebra-representation}
    The Pauli algebra 
    $\Pcc$
    admits a unique (up to unitary equivalence) irreducible $*$-representation  
    $\pi: \Pcc \to \bounded(\complex^{2^n})$ 
    given by
    \begin{equation}
    \label{eq:representation}
    \begin{aligned}
    \pi(\pauli{x}{i}) &= \Iq^{\otimes (i-1)} \otimes \X \otimes \Iq^{\otimes (n-i)}, \\
    \pi(\pauli{y}{i}) &= \Iq^{\otimes (i-1)} \otimes \Y \otimes \Iq^{\otimes (n-i)}, \\
    \pi(\pauli{z}{i}) &= \Iq^{\otimes (i-1)} \otimes \Z \otimes \Iq^{\otimes (n-i)}.
    \end{aligned}
    \end{equation}
\end{lemma}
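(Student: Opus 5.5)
The proof has two parts. First we check that the map $\pi$ in \Cref{eq:representation} is a well-defined irreducible $*$-representation. Then we show that every irreducible $*$-representation is unitarily equivalent to it.

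\emph{Well-definedness and irreducibility.} By the universal property of the quotient $\Pcc = \complex\langle \Xc_n\rangle / I_n$, it suffices to check that the proposed images satisfy the relations in \Cref{eq:pauli_algebra_relations}. They are Hermitian, so the involution is respected. Each squares to $\I$. On a single site they satisfy $\X\Y = \ib\Z$ and its cyclic versions. Operators acting on different tensor factors commute.

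For irreducibility, note that the image of $\pi$ contains every Pauli string $\Pb \in \Pn$, since each $\Pb$ is a product of single-site images. The Pauli strings span $\bounded(\complex^{2^n})$, so $\pi(\Pcc) = \bounded(\complex^{2^n})$. Hence the only invariant subspaces are trivial, and $\pi$ is irreducible.

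\emph{Uniqueness.} Let $\tau : \Pcc \to \bounded(\Hc)$ be an irreducible $*$-representation. Using the relations, every word reduces to a scalar multiple of a normal-ordered product $\prod_i \mathsf{P}_i$ with one factor from $\{1, \pauli{x}{i}, \pauli{y}{i}, \pauli{z}{i}\}$ per site. Thus $\Pcc$ is spanned by the $4^n$ elements $p_{\Pb}$ corresponding to Pauli strings, and $\dim \Pcc \le 4^n$. Since $\pi$ is surjective onto $\bounded(\complex^{2^n})$, which has dimension $4^n$, we get $\dim \Pcc = 4^n$ and $\pi$ is an algebra isomorphism $\Pcc \cong M_{2^n}(\complex)$.

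The operators $\tau(p_{\Pb})$ span the commutant-closed $*$-algebra $\tau(\Pcc)$. Irreducibility together with finite-dimensionality of $\tau(\Pcc)$ forces $\dim \Hc < \infty$: any nonzero vector $v$ gives a finite-dimensional invariant subspace $\tau(\Pcc)v$, which must equal $\Hc$.

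It follows that $\tau \circ \pi^{-1}$ is a unital $*$-representation of the full matrix algebra $M_{2^n}(\complex)$. Such a representation is a direct sum of copies of the identity representation, up to unitary equivalence. One can see this from the matrix units $\tau(\pi^{-1}(\ketbra{i}{j}))$: they satisfy the matrix unit relations, and the range of $\tau(\pi^{-1}(\ketbra{0}{0}))$ gives the multiplicity space. By irreducibility there is exactly one copy, so $\tau = U \pi(\cdot) U^\dagger$ for some unitary $U$.

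\emph{Main obstacle.} The one delicate step is the structure theorem for representations of $M_N(\complex)$, together with ruling out the infinite-dimensional case. I would handle both by direct use of matrix units. Set $E_{ij} = \tau(\pi^{-1}(\ketbra{i}{j}))$. Choose a unit vector $v$ in the range of $E_{00}$, which is nonzero because $\sum_i E_{ii} = \I$. Then define $U\ket{i} = E_{i0}v$. The matrix unit relations show that $U$ is an isometry intertwining $\pi$ and $\tau$. Its range is invariant under $\tau(\Pcc)$, so by irreducibility it is all of $\Hc$, and $U$ is unitary.
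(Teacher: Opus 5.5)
Your proof is correct and has the same backbone as the paper's: identify $\Pcc$ with the full matrix algebra, then use uniqueness of irreducible representations of $M_{2^n}(\complex)$. Both halves, however, are carried out differently.

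For the isomorphism, the paper first shows the single-site algebra is isomorphic to $\bounded(\complex^2)$. It then tensors up to a surjection $\bounded(\complex^{2^n}) \to \Pcc$ and gets injectivity from simplicity of the matrix algebra. You instead verify the relations so that $\pi$ is well defined, note that it is onto, and combine this with the normal-ordering bound $\dim \Pcc \le 4^n$. This avoids the simplicity theorem and makes the irreducibility of $\pi$ explicit rather than implicit.

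For uniqueness, the paper cites a ring-theoretic fact: a simple algebra has one simple module up to isomorphism. By itself, that citation does not spell out the passage from an irreducible $*$-representation on a possibly infinite-dimensional separable Hilbert space (irreducible meaning no nontrivial \emph{closed} invariant subspace) to a simple module. Nor does it give \emph{unitary} rather than merely module equivalence. Your matrix-unit construction $U\ket{i} = E_{i0}v$ handles both at once. The relations $E_{ij}^\dagger = E_{ji}$ and $E_{ij}E_{kl} = \delta_{jk}E_{il}$ make $U$ an isometric intertwiner. Its range is finite-dimensional, hence closed, invariant, and equal to $\Hc$. So your separate finite-dimensionality paragraph is actually redundant.

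One line to tighten: $\sum_i E_{ii} = \I$ only gives some $E_{ii} \neq 0$. To conclude $E_{00} \neq 0$, use $E_{ii} = E_{i0}E_{00}E_{0i}$.

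In short, the paper's route is shorter by citation. Yours is self-contained and proves the statement in exactly the setting where the NPO formulation uses it: separable Hilbert spaces and unitary equivalence.
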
 
\noindent Given an 
$n$-qubit 
Hamiltonian 
$H \in \bounded(\complex^{2^n})$, 
we can express it in the Pauli basis as
\begin{equation*}
    H = \sum_{\Pb \in \Pn} h_{\Pb} \Pb,
\end{equation*}
for some coefficients $h_{\Pb} \in \complex$. 
Using the representation $\pi$ in \Cref{lem:Pauli-algebra-representation}, we can identify $H$ with the Hermitian polynomial 
\begin{equation*}
    f^H \eqdef \sum_{\Pb \in \Pn} h_{\Pb} \pi^{-1}(\Pb) \in \Pcc,
\end{equation*}
where 
\begin{equation*}
    \pi^{-1}(\Pb) = \pi^{-1}(\mathrm{P}_{\!1}) \cdots \pi^{-1}(\mathrm{P}_{\!n}) \in \Pcc, \quad \text{for } \Pb = \mathrm{P}_{\!1} \otimes \cdots \otimes \mathrm{P}_{\!n} \in \Pn.
\end{equation*}
It is then a consequence of \Cref{lem:Pauli-algebra-representation} that minimizing $f^H$ over the Pauli algebra 
$\Pcc$, 
as in \Cref{eq:NPO_min_eigenvalue}, is equivalent to finding the ground state energy of 
$H$:
\begin{equation}
    \label{eq:ground_state_NPO}
    f^H_{\min} = H_{\min} \eqdef \sup \{ \lambda \in \real : H - \lambda \I \succeq 0 \}.
\end{equation}
Similarly, the $r$-th level of the SOHS hierarchy
\eqref{eq:SOHS_hierarchy}
for $f^H$ can be expressed as
\begin{equation}
    \label{eq:SOHS_hierarchy_Hamiltonian}
    f^H_{r} = H_r \eqdef \sup \{ \lambda \in \real : H - \lambda \I = \sum_i A_i^\dagger A_i, \text{ where } A_i \in \Span(\Pn[\le r]) \}.
\end{equation}
The SOHS hierarchy \eqref{eq:SOHS_hierarchy_Hamiltonian} for the ground state energy problem \eqref{eq:ground_state_NPO} has finite convergence, i.e. there exists 
$r \in \natural$ 
such that 
$f^H_r = f^H_{\min}$. 
In fact, as
$H - f^H_{\min} \I \succeq 0$,
there exists an operator 
$A \in \bounded(\complex^{2^n})$
such that
$H - f^H_{\min} \I = A^\dagger A$,
and 
$A$ 
can be expressed as a linear combination of Pauli strings of weight at most 
$n$, 
which implies that 
$f^H_n = f^H_{\min}$ 
for 
$r = n$.

\section{Construction of the linear map}
\label{sec:kernel-const}

In \Cref{sec:proof_outline}, we explained how the existence of a linear map 
$\Phi: \bounded(\complex^{2^n}) \to \bounded(\complex^{2^n})$, that satisfies certain conditions, results in an upper bound on the error of the level $r$ SOHS hierarchy relaxation. The following statement, which we have already proved there, summarizes that discussion.
\begin{prop}
    \label{prop:kernel_method}
	Let $H \in \Span(\Pn[\le d])$ be an $n$-qubit Hamiltonian. For 
    $r \le n$,
    assume that there exists a linear map 
    $\Phi:  
     \bounded(\complex^{2^n}) \to \bounded(\complex^{2^n})
    $
    such that
	\begin{enumerate}
        \setcounter{propitem}{-1}
        \pitem\label{P0}
        $\Phi(\I) = \I$.
		\pitem\label{P1} 
        $\Phi^{-1}$
        is well-defined on 
        $\Span \big( \supp(H) \cup \{ \I \}  \big)$,
        where 
        $\supp(H)$
        denotes the set of Pauli strings appearing in the expansion of 
        $H$
        in the Pauli basis. Moreover, 
        $\Phi^{-1}(H)$ is Hermitian. 
        \pitem\label{P2}
        $\|\Phi^{-1}(H) - H\| \le \epsilon(r)$ for some $\epsilon(r) \geq 0$.
		\pitem\label{P3}
        For all 
        $X \in  \bounded(\complex^{2^n})$ such that 
        $X \succeq 0$,
		there exist 
        $m \in \natural$
        and 
        $A_1,\dots,A_m \in \Span(\Pn[\le r])$ 
        such that 
        $\Phi(X) = \sum_{i=1}^m A_i^\dagger A_i$.
	\end{enumerate}
	Then, there exists 
    $t \in \natural$
    and
    $S_1, \dots, S_t \in \Span(\Pn[\le r])$
    such that
	\[H - (H_{\min} - \epsilon(r))\I = \sum_{i \in [t]} S_i^\dagger S_i,\] 
    and hence,
    \[H_{\min} - H_r \le \epsilon(r).\]
\end{prop}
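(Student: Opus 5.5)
The plan is to follow the outline of \Cref{sec:proof_outline}: build the operator $X \eqdef \Phi^{-1}\big(H - (H_{\min}-\epsilon(r))\I\big)$, show it is positive semidefinite, and then push it back through $\Phi$ using \Prop{P3}.

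First, \Prop{P1} lets us make sense of $\Phi^{-1}$ on $\Span(\supp(H)\cup\{\I\})$. I will read this as: there is a linear map $\Phi^{-1}$ defined on this subspace such that $\Phi\circ\Phi^{-1}$ is the identity there. By \Prop{P0}, $\Phi(\I)=\I$, so we may take $\Phi^{-1}(\I)=\I$. This is the point to be careful about. If the inverse is only specified on the subspace, I would define $\Phi^{-1}(\I)\eqdef\I$ and $\Phi^{-1}(H)$ as given by \Prop{P1}, and extend linearly. Since $H$ is a combination of $\supp(H)$, linearity then gives
\[
X = \Phi^{-1}(H) - (H_{\min}-\epsilon(r))\I .
\]
This operator is Hermitian by \Prop{P1}.

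Next I split $X$ as
\[
X = \big(\Phi^{-1}(H) - H + \epsilon(r)\I\big) + \big(H - H_{\min}\I\big).
\]
The second term is PSD by the definition of $H_{\min}$ in \Cref{eq:ground_state_NPO}. The first term is Hermitian, being a difference of Hermitian operators. By \Prop{P2} its Hermitian part $\Phi^{-1}(H)-H$ has operator norm at most $\epsilon(r)$, so its spectrum lies in $[-\epsilon(r),\epsilon(r)]$ and $\Phi^{-1}(H)-H \succeq -\epsilon(r)\I$. Hence the first term is PSD, and so is $X$.

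Applying \Prop{P3} to $X\succeq 0$ gives $A_1,\dots,A_m\in\Span(\Pn[\le r])$ with $\Phi(X)=\sum_i A_i^\dagger A_i$. Since $\Phi\circ\Phi^{-1}$ is the identity on the subspace, $\Phi(X) = H-(H_{\min}-\epsilon(r))\I$. Setting $S_i=A_i$ and $t=m$ gives the claimed decomposition.

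Finally, $\lambda = H_{\min}-\epsilon(r)$ is feasible for the supremum in \Cref{eq:SOHS_hierarchy_Hamiltonian}, so $H_r\ge H_{\min}-\epsilon(r)$. The only delicate step is making precise the meaning of ``$\Phi^{-1}$ well-defined'' so that $\Phi(\Phi^{-1}(Y))=Y$ holds for $Y=H$ and $Y=\I$. Everything else is routine.
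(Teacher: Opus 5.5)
Your proposal is correct and is essentially the paper's own argument from \Cref{sec:proof_outline}: write $\Phi^{-1}\big(H-(H_{\min}-\epsilon(r))\I\big)=\big(\Phi^{-1}(H)-H+\epsilon(r)\I\big)+\big(H-H_{\min}\I\big)\succeq 0$, then apply $\Phi$ and \Prop{P3}. The step you flag as delicate needs no linear extension of $\Phi^{-1}$: set $X \eqdef \Phi^{-1}(H)-(H_{\min}-\epsilon(r))\I$ directly, and then linearity of $\Phi$ together with \Prop{P0} gives $\Phi(X)=H-(H_{\min}-\epsilon(r))\I$.
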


Our aim is now to find a linear map 
$\Phi$
satisfying the conditions 
\Prop{P0}-\Prop{P3} in \Cref{prop:kernel_method}. We will restrict this search to linear maps of the form 
\begin{equation}
\label{eq:restricted_kernel}
    \Phi_{\bbeta} \eqdef \sum_{k=0}^{n} \eta_k \Ec_k,
\end{equation}
where 
$\bbeta= (\eta_0,\dots, \eta_{n}) \in \real^{n+1}$
and
$\Ec_k: \bounded(\complex^{2^n}) \to \bounded(\complex^{2^n})$
is defined
for 
$k \in \inter{0}{n}$
as
\begin{equation}
\label{eq:EC_def}
    \Ec_k(X) \eqdef 2^{-n} \sum_{\Pb \in \Pn[k]} \tr(\Pb X) \Pb.
\end{equation}
From 
\Cref{eq:EC_def}, we see that
for any 
$k \in \inter{0}{n}$,
$\Ec_k$ is the orthogonal projection, with respect to the Hilbert-Schmidt inner-product, onto 
$\Span(\Pn[k])$.
Therefore,
for all
$k \in \inter{0}{n}$, we have that
$\eta_k$
is an eigenvalue of the map 
$\Phi_{\bbeta}$
with the corresponding eigenspace 
$\Span(\Pn[k])$.
From this observation, we obtain conditions on $\eta_0, \dots, \eta_{n}$ under which $\Prop{P0}$, $\Prop{P1}$, and $\Prop{P2}$ are satisfied. 

First note that by imposing 
$\eta_0 = 1$, we have $\Phi_{\bbeta}(\I) = \I$ and hence
$\Phi_{\bbeta}$
will satisfy \Prop{P0}.
For \Prop{P1}, observe that if 
$\eta_k \neq 0$ 
for all
\[
k \in \{ \wt(\Pb)~:~ \Pb \in \supp(H) \cup \{ \I \} \},
\]
then 
$\Phi_{\bbeta}$ 
is invertible on 
$\Span \big(\supp(H) \cup \{ \I \}  \big)$, 
with the inverse having the action
\begin{equation}\label{eq:phi-inverse}
    \Phi_{\bbeta}^{-1} (\Pb) = \frac{1}{\eta_{\wt(\Pb)}} \Pb, \quad \text{for all }\Pb \in \supp(H) \cup \{ \I \}.
\end{equation} 
From this, it immediately follows that 
$\Phi_{\bbeta}^{-1}(H)$
is Hermitian.\\
\smallskip

For \Prop{P2}, we note that 
$\sum_{k=0}^{n} \Ec_k = \id$ and for $H \in \Span(\Pn[\le d])$ we have $\Ec_k(H) = 0$ if $k > d$. Thus,
\begin{equation} 
    \begin{aligned} 
        \|\Phi^{-1} (H) - H\| & = \|\sum_{k=0}^{d} (\eta_k^{-1}-1) \Ec_k(H)\|\\ & \le \max_{j=0,\dots,d} \| \Ec_j(H)\| \cdot \sum_{k=0}^{d} |\eta_k^{-1}-1|,    
    \end{aligned} 
\end{equation}
where the inequality follows from the triangle inequality. 
As a consequence, any choice of $\epsilon(r)$ such that
\[
\max_{j=0,\dots,d} \| \Ec_j(H)\| \cdot \sum_{k=0}^{d} |\eta_k^{-1}-1| \le \epsilon(r),
\]
results in \Prop{P2} holding.

\subsection{Ensuring property \Prop{P3}}
The final property \Prop{P3} requires some additional work to justify. In the following, we derive an alternative expression for the map 
$\Phi_{\bbeta}$ in terms of the coefficients 
$\bbeta= (\eta_0,\dots,\eta_{n})$, connecting them to Krawtchouk polynomials. This connection enables us to find a sufficient condition on 
$\bbeta = (\eta_0, \dots, \eta_{n})$
to satisfy 
\Prop{P3}.

Recall the definition of the maps
$\{\Ec_k\}_{k \in \inter{0}{n}}$ from \Cref{eq:EC_def},
\[
\Ec_k(X) \eqdef 2^{-n} \sum_{\Pb \in \Pn[k]} \tr(\Pb X) \Pb,
\]
and let 
$\{\Ac_k\}_{k \in \inter{0}{n}}$
be another family of maps
$\bounded(\complex^{2^n}) \to \bounded(\complex^{2^n}),$
whose action on 
$X \in \bounded(\complex^{2^n})$
is defined as
\begin{equation}
    \label{eq:Ac_definition}
\begin{aligned}
\Ac_k(X) & \eqdef \sum_{\Qb \in \Pn[k]} \Qb X \Qb.
\end{aligned}
\end{equation}
As we prove in Appendix~\ref{sec:apndx_proofs_kernel_const} 
(see \Cref{prop:krawtchouk_transform}), for $k \in \inter{0}{n}$, we have
\begin{align}
        \Ac_k & = \sum_{t=0}^n \mathcal K_{k,4}^n(t)\,\Ec_t,  \label{eq:krawtchouk_transform_1}\\
        \Ec_k & = 4^{-n}\sum_{t=0}^n \mathcal K_{k,4}^n(t)\,\Ac_t, \label{eq:krawtchouk_transform_2}
\end{align}
where 
$\mathcal K_{k,4}^n$ 
denotes the 
$4$-ary 
Krawtchouk polynomial of degree 
$k$, defined in \Cref{eq:krawtchouk_definition}. 
Since
$\Phi \mapsto \choi{\Phi}$
is a linear isomorphism,
we have similar transformations between their respective Choi matrices:
\begin{align}
\choi{\Ac_k} & = \sum_{t=0}^n \mathcal K_{k,4}^n(t)\,\choi{\Ec_t}, \label{eq:krawtchouk_transform_choi_1}\\
\choi{\Ec_k} & = 4^{-n}\sum_{t=0}^n \mathcal K_{k,4}^n(t)\,\choi{\Ac_t}. \label{eq:krawtchouk_transform_choi_2}
\end{align}
Writing the definition of the Choi matrix, we can also obtain an explicit form for 
$\choi{\Ac_k}$
and
$\choi{\Ec_k}$ 
(see \Cref{prop:choi_PDMaps} in Appendix~\ref{sec:apndx_proofs_kernel_const}):
\begin{align}
    J_{\Ac_k} & = \sum_{\Pb \in \Pn[k]} \vec{\Pb} \cev{\Pb}, \label{eq:choi_Ac_k_explicit}\\
    J_{\Ec_k} & = 2^{-n} \sum_{\Pb \in \Pn[k]} \Pb^\top \otimes \Pb, \label{eq:choi_Ec_k_explicit}
\end{align}
where 
$\vec{\Pb}$ 
is the Choi state of 
$\Pb$
defined in 
\Cref{eq:vectorization_definition}.
From
$\cevvec{\Pb}{\Qb} = 2^n \delta_{\Pb, \Qb}$ (see \Cref{prop:orthogonal_projection} in Appendix~\ref{sec:apndx_proofs_kernel_const}), it follows that the family of operators 
$\{
    \widehat{J}_{\Ac_k}
\}_{k \in \inter{0}{n}}
$
defined as
\begin{equation}
    \label{eq:normalized_choi_Ac}
    \widehat{J}_{\Ac_k} \eqdef 2^{-n} J_{\Ac_k},
\end{equation}
forms a set of orthogonal projectors that satisfy 
$
    \sum_{k=0}^n \widehat{J}_{\Ac_k} = \I \otimes \I
$.
Consequently, the operator
$D \in \bounded(\complex^{2^n} \otimes \complex^{2^n})$ defined as
\begin{equation}
    \label{eq:D_definition}
    D \eqdef \sum_{t=0}^n t 
    \widehat{J}_{\Ac_t}\,,
\end{equation}
is diagonalizable with eigenvalues 
$t = 0, 1, \dots, n$.
Therefore, for any univariate polynomial
$q \in \real[x]$,
\begin{equation}
    \label{eq:spectral_reminder}
    q(D) = \sum_{t=0}^n q(t) \widehat{J}_{\Ac_{t}}.
\end{equation}
\begin{prop}
    \label{prop:kernel_operator}
    Let $q \in \real[x]$ be a univariate polynomial, where
    \begin{equation}
    \label{eq:q_krawtchouk_expansion_assumption}
        q(x) = \sum_{k=0}^{n} \eta_k \mathcal K_{k,4}^n(x), \quad \text{for all $x \in \inter{0}{n}$},
    \end{equation}
    for $\bbeta = (\eta_0,\ldots,\eta_{n}) \in \real^{n+1}$.
    Then the following hold:
    \begin{enumerate}
    \item         
    $q(D) = \sum_{k=0}^{n} \eta_k \sum_{\Pb \in \Pn[k]} \Pb^\top \otimes \Pb$, 
    \item 
    Denoting the linear map corresponding to the Choi matrix 
    $2^{-n} q(D)$ by 
    $\Phi_{2^{-n}q(D)}$,
    we have
    \begin{equation}
    \label{eq:alternative_form}
    \Phi_{2^{-n}q(D)}
    = \Phi_{\bbeta} 
    = \sum_{k=0}^{n} \eta_k \Ec_k.
    \end{equation}
    \end{enumerate}
\end{prop}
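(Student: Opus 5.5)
We will obtain both parts from the spectral representation \eqref{eq:spectral_reminder} combined with the Krawtchouk transform between the Choi matrices of $\Ac_k$ and $\Ec_k$.

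First, by \eqref{eq:spectral_reminder} and the hypothesis \eqref{eq:q_krawtchouk_expansion_assumption}, which only uses $q$ at integer points $t \in \inter{0}{n}$,
\[
q(D) = \sum_{t=0}^n q(t)\, \widehat{J}_{\Ac_t} = 2^{-n}\sum_{t=0}^n \sum_{k=0}^n \eta_k \Kc_{k,4}^n(t)\, \choi{\Ac_t}.
\]
We exchange the two sums and recognise the inner sum via \eqref{eq:krawtchouk_transform_choi_2}:
\[
\sum_{t=0}^n \Kc_{k,4}^n(t)\,\choi{\Ac_t} = 4^n \choi{\Ec_k}.
\]
This gives
\[
q(D) = 2^{n} \sum_{k=0}^n \eta_k \choi{\Ec_k}.
\]
Substituting the explicit form \eqref{eq:choi_Ec_k_explicit}, $\choi{\Ec_k} = 2^{-n}\sum_{\Pb\in\Pn[k]} \Pb^\top\otimes\Pb$, yields
\[
q(D) = \sum_{k=0}^n \eta_k \sum_{\Pb\in\Pn[k]}\Pb^\top\otimes \Pb,
\]
which is part 1.

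For part 2, we have just shown $2^{-n}q(D) = \sum_k \eta_k \choi{\Ec_k} = \choi{\sum_k \eta_k \Ec_k} = \choi{\Phi_{\bbeta}}$, using linearity of $\Phi\mapsto\choi{\Phi}$. Since this map is a linear isomorphism, with inverse given by \eqref{eq:choi_inversion}, the map whose Choi matrix is $2^{-n}q(D)$ is exactly $\Phi_{\bbeta}$. As a sanity check, one can apply \eqref{eq:choi_inversion} directly to $2^{-n}\Pb^\top\otimes\Pb$: this gives $2^{-n}\tr(\Pb^\top X^\top)\Pb = 2^{-n}\tr(X\Pb)\Pb$, which is consistent with \eqref{eq:EC_def}.

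The only real content is the index bookkeeping: \eqref{eq:krawtchouk_transform_choi_2} must be applied with the Krawtchouk polynomial indexed by $k$ and evaluated at $t$, matching the orientation in the expansion of $q(t)$. We will also check that the powers of $2$ combine correctly: $2^{-n}$ from $\widehat{J}$, $4^n$ from the inverse transform, and $2^{-n}$ from $\choi{\Ec_k}$. Beyond that, everything rests on the previously established identities.
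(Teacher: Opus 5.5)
Your proposal is correct and matches the paper's proof essentially line for line. Part 1 is the same computation via \eqref{eq:spectral_reminder} and \eqref{eq:krawtchouk_transform_choi_2}, and the powers of $2$ combine as you describe. For part 2, the paper applies the inversion formula \eqref{eq:choi_inversion} term by term instead of invoking injectivity of $\Phi\mapsto\choi{\Phi}$, but the two arguments amount to the same thing.
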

\begin{proof}
    \begin{enumerate}
        \item We use the assumption \eqref{eq:q_krawtchouk_expansion_assumption} to rewrite \Cref{eq:spectral_reminder};
    \begin{equation}
    \label{eq:kernel_in_Ec}
    \begin{aligned}
         q(D) 
            & =
            \sum_{t=0}^n q(t) \widehat{J}_{\Ac_t} \\
            & =
            \sum_{t=0}^n \left(\sum_{k=0}^{n} \eta_k \mathcal K_{k,4}^n(t)\right) \widehat{J}_{\Ac_t} \\
            & =
            2^{-n} \sum_{k=0}^{n}  \eta_k \left(\sum_{t=0}^n \mathcal K_{k,4}^n(t) J_{\Ac_t}\right)\\
            &=2^n\sum_{k=0}^{n} \eta_k J_{\Ec_k},
    \end{aligned}
    \end{equation}
    where the last line
    follows from \Cref{eq:krawtchouk_transform_choi_2}, that is,
    $\sum_{t=0}^n \mathcal K_{k,4}^n(t) J_{\Ac_t} = 4^n J_{\Ec_k}$. 
    From \Cref{eq:choi_Ec_k_explicit}, we have
    $
        2^n \choi{\Ec_k} =
        \sum_{\Pb \in \Pn[k]} \Pb^\top \otimes \Pb
    $, which implies
    \[
    q(D) = \sum_{k=0}^{n} \eta_k \sum_{\Pb \in \Pn[k]} \Pb^\top \otimes \Pb.
    \]
    \item
    For any
    $X \in \bounded(\mathbb{C}^{2^n})$, by the definition of the Choi matrix and \Cref{eq:kernel_in_Ec}, we have 
    \begin{align*}
    \Phi_{2^{-n}q(D)}(X)
    & =
    \Tr_1\!\bigl[2^{-n}q(D)(X^\top \otimes \I)\bigr] \\
    & =
    \sum_{k=0}^{n} \eta_k \Tr_1\!\bigl[\choi{\Ec_k}(X^\top \otimes \I)\bigr].
    \end{align*}
    Recalling from \Cref{eq:choi_inversion} that
    $\Tr_1\!\bigl[\choi{\Ec_k}(X^\top \otimes \I)\bigr] = \Ec_k (X)$, we get
    $\Phi_{2^{-n}q(D)} = \sum_{k=0}^{n} \eta_k \Ec_k$.
\end{enumerate}
\end{proof}

From
\Cref{prop:kernel_operator} 
we see that any univariate polynomial 
$q$ of degree at most 
$2r$
gives rise to a Choi matrix 
\begin{equation}
        C_q \eqdef 2^{-n}q(D),
\end{equation}
and, in turn, to the linear map
$\Phi_{2^{-n} q(D)} = \sum_{k=0}^{n} \eta_k \Ec_k$,
where 
$\bbeta = (\eta_0, \dots, \eta_{n})$
satisfy
\[
q(x) = \sum_{k = 0}^n \eta_k \Kc_{k,4}^n (x), \quad \text{for } x \in \inter{0}{n}.
\] 
Our aim is to show that if we take 
\begin{equation*}
    q(x) = u^2(x),
\end{equation*}
for some 
$u \in \real[x]$ of degree $r$,
then the linear map corresponding to the Choi matrix 
$C_q$
satisfies 
\Prop{P3}. To show this, we  need the following lemma.

\begin{lemma}
	\label{prop:kernel-sos}
	Let 
    $C = R^\dagger R$
    for 
    $R \in \bounded(\complex^{2^n} \otimes \complex^{2^n})$, 
    and let
	$$R = \sum_{\Pb,\Qb \in \Pn} R_{\Pb,\Qb} \Pb \otimes \Qb, \quad \text{for some } R_{\Pb,\Qb} \in \complex.$$
    If 
    $\Qb \in \Pn[\le r]$
    for every pair 
    $(\Pb,\Qb)$
    with
    $R_{\Pb,\Qb} \neq 0$,
	then the linear map corresponding to 
    $C$, denoted by
    $\Phi_C$,
    satisfies 
    \Prop{P3} 
    in \Cref{prop:kernel_method}.
\end{lemma}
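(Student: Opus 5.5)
The plan is to compute $\Phi_C(X)$ directly from the Choi inversion formula \eqref{eq:choi_inversion}, reduce to rank-one positive inputs, and then split $R^\dagger R$ by inserting a resolution of the identity on the first tensor factor.

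\emph{Reduction to rank-one inputs.} Let $X \succeq 0$. Then $X^\top = \overline{X}$ is also positive semidefinite, so we can write $X^\top = \sum_{j} \ketbra{w_j}{w_j}$ for finitely many vectors $\ket{w_j} \in \complex^{2^n}$, for instance via a spectral decomposition of $\overline{X}$. By linearity of $\Phi_C$ and of the partial trace, it suffices to show that for every vector $\ket{w}$ the operator
\[
\Tr_1\!\bigl[R^\dagger R\,(\ketbra{w}{w} \otimes \I)\bigr]
\]
is a sum of Hermitian squares of elements of $\Span(\Pn[\le r])$. Summing over $j$ then gives the required decomposition of $\Phi_C(X)$.

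\emph{The rank-one case.} First use the elementary identity $\Tr_1[A(\ketbra{w}{w}\otimes \I)] = (\bra{w}\otimes \I)\,A\,(\ket{w}\otimes \I)$, which follows by expanding the partial trace in the computational basis. Applied to $A = R^\dagger R$, this gives
\[
(\bra{w}\otimes\I)\,R^\dagger R\,(\ket{w}\otimes\I).
\]
Insert $\I\otimes\I = \sum_{i\in\{0,1\}^n} \ketbra{i}{i}\otimes \I$ between $R^\dagger$ and $R$. The expression becomes $\sum_i B_i^\dagger B_i$, where
\[
B_i \eqdef (\bra{i}\otimes \I)\,R\,(\ket{w}\otimes \I) = \sum_{\Pb,\Qb\in\Pn} R_{\Pb,\Qb}\,\bra{i}\Pb\ket{w}\,\Qb .
\]
By hypothesis only terms with $\Qb\in\Pn[\le r]$ have $R_{\Pb,\Qb}\neq 0$. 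Hence each $B_i$ lies in $\Span(\Pn[\le r])$, and \Prop{P3} holds with $m$ equal to $2^n$ times the number of vectors $\ket{w_j}$.

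\emph{Expected difficulty.} There is no substantive obstacle here. The points that need care are bookkeeping ones: the transpose in \eqref{eq:choi_inversion} must preserve positivity, which it does; the partial-trace identity must be written with the correct bra/ket placement; and one should note that the first-factor Paulis $\Pb$ contribute only scalar coefficients $\bra{i}\Pb\ket{w}$. That last observation is exactly why there is no restriction on the weight of $\Pb$.
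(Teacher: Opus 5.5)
Your proof is correct and is essentially the paper's argument. The paper sets $A = R(\sqrt{X^\top}\otimes\I)$ and writes $\Tr_1[A^\dagger A] = \sum_{k,\ell} A_{k,\ell}^\dagger A_{k,\ell}$, where $A_{k,\ell} = (\bra{k}\otimes\I)\,R\,(\sqrt{X^\top}\ket{\ell}\otimes\I)$; these blocks are exactly your $B_k$ with $\ket{w} = \sqrt{X^\top}\ket{\ell}$, and in both versions each square lies in $\Span(\Pn[\le r])$ because the first-factor Paulis contribute only scalar coefficients.
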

\begin{proof}
	Let $X \succeq 0$.
    Then, its transpose 
    $X^\top$ 
    is also positive semidefinite; hence, 
    $\sqrt{X^\top}$ 
    is well-defined.
	Consider the operator
	\begin{equation}
        \label{eq:A_definition}
		A \eqdef
		R (\sqrt{X^\top} \otimes \I)
		= 
		\sum_{\Pb,\Qb \in \Pn} R_{\Pb,\Qb} \Pb \sqrt{X^\top} \otimes \Qb.
	\end{equation}
	For any 
    $\Pb \in \Pn$, 
    let 
    $\Pb \sqrt{X^\top} = \sum_{k,\ell \in \{0,1\}^n} \alpha_{k,\ell}^{\Pb} \ketbra{k}{\ell}$ 
    for some coefficients
    $\alpha_{k,\ell}^{\Pb} \in \complex$. 
    Expanding $A$ we have
	\begin{align*}
		A & = 
        \sum_{\Pb,\Qb \in \Pn} 
        R_{\Pb,\Qb} 
        \Bigl(
            \sum_{k,\ell \in \{0,1\}^n} 
            \alpha_{k,\ell}^{\Pb} 
            \ketbra{k}{\ell}
        \Bigr) \otimes \Qb \\
        & = 
        \sum_{k,\ell \in \{0,1\}^n}
        \ketbra{k}{\ell} \otimes \Bigl(
            \sum_{\Pb,\Qb \in \Pn}
            R_{\Pb,\Qb}
            \alpha_{k,\ell}^{\Pb}
            \Qb
        \Bigr).
	\end{align*}
	Let us define
    \begin{equation}
        A_{k,\ell} \eqdef 
        \sum_{\Pb,\Qb \in \Pn}
        R_{\Pb,\Qb}
        \alpha_{k,\ell}^{\Pb}
        \Qb.
    \end{equation}
    From the assumption on $R_{\Pb,\Qb}$, we have $A_{k,\ell} \in \Span(\Pn[\le r])$ for all $k,\ell \in \{0,1\}^n$.
    To conclude, we note that		
	\begin{align*}
		\Phi_{C}(X) & = 
         \Tr_1\!\bigl[C(X^\top \otimes \I)\bigr] \\
		& = 
        \Tr_1\!
        \bigl[
            (\sqrt{X^\top} \otimes \I) 
            R^\dagger R ( \sqrt{X^\top} \otimes \I)
        \bigr]  \\
		& =
          \Tr_1\!\bigl[A^\dagger A\bigr] \\
        & = \sum_{k,\ell \in \{0,1\}^n} \left(A_{k,\ell}\right)^\dagger 
        \left( A_{k,\ell}\right)\,,
\end{align*}		
showing the output is an SOHS 
polynomial of degree at most $2r$.
\end{proof} 

Recall from 
\Cref{eq:D_definition} that 
$D$
is a Hermitian operator. 
Thus, for any degree-$r$ polynomial
$u \in \real[x]$,
    \begin{equation}
        C_{u^2} = 2^{-n} u^2(D) = 2^{-n} u(D)^\dagger u(D).
    \end{equation}
Since 
    $u$
    is of degree 
    $r \le n$,
    it can be expanded in the Krawtchouk basis 
    $\{\Kc_{k,4}^n\}_{k \in \inter{0}{n}}$, resulting in the polynomial identity
    $u(x) = \sum_{k=0}^{r} u_k \Kc_{k,4}^n (x)$.
Then,
from 
    \Cref{prop:kernel_operator}
    we know that
    \begin{equation}
        u(D) = \sum_{k=0}^r u_k \sum_{\Pb \in \Pn[k]} \Pb^\top \otimes \Pb.
    \end{equation}
This implies that
    $C_{u^2}$
satisfies the premise of 
    \Cref{prop:kernel-sos},
    and consequently, its corresponding linear map
    $\Phi_{2^{-n} u^2(D)}$ 
    satisfies 
    \Prop{P3}.

    The following theorem gives a summary of the sufficient conditions for satisfying \Prop{P0}-\Prop{P3} that we have established.

    \begin{theorem}
    \label{thm:ev_conditions}
        Let
        $H \in \Span(\Pn[\le d])$
        be an 
        $n$-qubit Hamiltonian, and let
        $u \in \real[x]$
        be a univariate polynomial of degree at most 
        $r \le n$, with 
        \[
        u^2(x) = \sum_{k=0}^{n} \lambda_k \Kc_{k,4}^n(x), \quad \text{for all }x \in \inter{0}{n},
        \]
        for some $\blambda = (\lambda_0,\ldots,\lambda_{n}) \in \real^{n+1}$.
        Then the linear map
        \[
        \Phi_{\blambda} = \sum_{k=0}^{n} \lambda_k \Ec_k,
        \]
       satisfies 
       \Prop{P3}, and
        \begin{enumerate}
            \item satisfies 
            \Prop{P0}
            if $\lambda_0 = 1$;
            \item 
            satisfies 
            \Prop{P1}
            if for all
            \[
            k \in \{ \wt(\Pb)~:~ \Pb \in \supp(H) \cup \{ \I \} \},
            \]
            we have $\lambda_k \neq 0$;
            \item 
             and satisfies 
            \Prop{P2}
            for a function 
            $\epsilon(r)$
            if 
            \begin{equation}
            \label{eq:normUpperBoundbyLambdas}
             \max_{j=0,\dots,d} \| \Ec_j(H)\| \cdot \sum_{k=0}^{d} |\lambda_k^{-1}-1| \le \epsilon(r).
            \end{equation}
        \end{enumerate}
    \end{theorem}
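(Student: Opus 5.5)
The argument is mostly an assembly of facts already established in this section, so I would organise it property by property.

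\Prop{P3} comes first, since it uses no hypothesis on $\blambda$. Because $\deg u \le r \le n$ and the Krawtchouk polynomials $\{\Kc_{k,4}^n\}_{k\in\inter{0}{n}}$ have exact degrees $0,\dots,n$, I can expand $u(x)=\sum_{k=0}^r u_k \Kc_{k,4}^n(x)$ as a polynomial identity. The coefficients $u_k$ are real because $u$ and the $\Kc_{k,4}^n$ have real coefficients. \Cref{prop:kernel_operator} (part 1), applied to $u$, gives
\[
u(D)=\sum_{k=0}^r u_k\sum_{\Pb\in\Pn[k]}\Pb^\top\otimes\Pb .
\]
This is an expansion in $\Pb'\otimes\Qb$ with $\Qb\in\Pn[\le r]$; here $\Pb^\top=\pm\Pb$ is again a Pauli string up to sign, and the sign is absorbed into the coefficient. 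Since $D$ is Hermitian and $u$ is real, $u(D)$ is Hermitian, so
\[
C_{u^2}=2^{-n}u^2(D)=R^\dagger R \quad\text{with}\quad R=2^{-n/2}u(D).
\]
\Cref{prop:kernel-sos} then shows that $\Phi_{C_{u^2}}$ satisfies \Prop{P3}. Finally, \Cref{prop:kernel_operator} (part 2), applied to $q=u^2$ with coefficients $\blambda$, identifies $\Phi_{C_{u^2}}=\Phi_{\blambda}$.

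The three conditional properties follow from the spectral picture of $\Phi_\blambda$. Each $\Ec_k$ is the Hilbert–Schmidt orthogonal projection onto $\Span(\Pn[k])$, so $\Phi_\blambda$ acts as the scalar $\lambda_k$ on that space.

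\begin{enumerate}
\item For \Prop{P0}: $\I$ spans $\Span(\Pn[0])$, so $\lambda_0=1$ gives $\Phi_\blambda(\I)=\I$.
\item For \Prop{P1}: if $\lambda_k\neq0$ for every weight $k$ occurring in $\supp(H)\cup\{\I\}$, then $\Phi_\blambda$ restricted to $\Span(\supp(H)\cup\{\I\})$ is an injective diagonal map. Its inverse is given by \eqref{eq:phi-inverse}. Since $\lambda_k$ is real and each $\Pb$ is Hermitian, $\Phi_\blambda^{-1}(H)=\sum_{\Pb} h_{\Pb}\lambda_{\wt(\Pb)}^{-1}\Pb$ is Hermitian whenever $H$ is, because the $h_{\Pb}$ are then real.
\item For \Prop{P2}: use $\sum_k\Ec_k=\id$ and $\Ec_k(H)=0$ for $k>d$ to write $\Phi_\blambda^{-1}(H)-H=\sum_{k=0}^d(\lambda_k^{-1}-1)\Ec_k(H)$. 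The triangle inequality then gives the bound in \eqref{eq:normUpperBoundbyLambdas}, as computed just before this theorem. In that sum I would read the terms with $\Ec_k(H)=0$ as absent, so that $\lambda_k^{-1}$ is needed only where \Prop{P1} guarantees $\lambda_k\neq 0$.
\end{enumerate}

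The only step needing real care is \Prop{P3}. I have to check that the Krawtchouk expansion of $u$ really stops at degree $r$, which holds because $\deg \Kc_{k,4}^n = k$ exactly. I also have to check that $\Pb^\top$ is again a signed Pauli string of the same weight, so that the hypothesis of \Cref{prop:kernel-sos} on the second tensor factor is met. Everything else is bookkeeping of results already proved in this section.
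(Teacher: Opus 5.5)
Your proposal is correct and follows essentially the same route as the paper. \Prop{P0}--\Prop{P2} come from the eigenspace picture of $\Phi_{\blambda}$. For \Prop{P3} you expand $u(D)$ via part 1 of \Cref{prop:kernel_operator}, factor $C_{u^2}=2^{-n}u(D)^\dagger u(D)$, apply \Cref{prop:kernel-sos}, and identify the map with $\Phi_{\blambda}$ via part 2 of \Cref{prop:kernel_operator}. Your extra checks fill in details the paper leaves implicit: that $\deg \Kc_{k,4}^n = k$ exactly, that $\Pb^\top=\pm\Pb$, and which $\lambda_k^{-1}$ actually enter \Prop{P2}.
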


We end this section by two remarks, highlighting how our construction of the linear map in this section connects to the existing literature.
\begin{remark}
    From \Cref{eq:krawtchouk_transform_1} and \Cref{eq:krawtchouk_transform_2}, which we prove in \Cref{prop:krawtchouk_transform}, one can immediately see that the linear subspace of maps spanned by
    $\{\Ac_k\}_{k=0}^n$
    and
    $\{\Ec_k\}_{k=0}^n$
    coincide. Moreover, we see that this subspace, which is closed under the composition of maps, is isomorphic to the \emph{Bose-Mesner algebra} associated with the \emph{Hamming scheme} 
    $H(n,4)$ \cite{delsarteAssociationSchemesCoding1998}.The Bose-Mesner algebra has been extensively studied in the literature of algebraic combinatorics and coding \cite{bannaiAlgebraicCombinatorics2021}. In the quantum information literature, this  algebraic structure 
    has been noted in the literature of quantum error correction, for example, in deriving quantum MacWilliams identities \cite{shorQuantumAnalogMacWilliams1997} and SDP bounds on quantum codes \cite{munneSDPBoundsQuantum2025}.
\end{remark}

\begin{remark}
In \cite{slotSumofSquaresHierarchiesPolynomial2022},
it was shown that in the case of commutative polynomial optimization, when applying the polynomial kernel method, using a \emph{perturbed} version of the so-called \emph{Christoffel-Darboux kernel} \cite{lasserreChristoffelDarbouxKernel2022} is advantageous. This is because in several cases, a closed form expression of this kernel exists, which can be used to establish a commutative analogue of \Prop{P3}. We refer the reader to \cite{slotSumofSquaresHierarchiesPolynomial2022, laurentOverviewConvergenceRates2026} for detailed expositions. The noncommutative analogue of the Christoffel-Darboux (CD) kernel is established in \cite{belinschiNoncommutativeChristoffelDarbouxKernels2022}.

Following the terminology in \cite[Section~3]{belinschiNoncommutativeChristoffelDarbouxKernels2022}, for $2r \leq n$, let 
$\kappa_{\tau,2r}$
be the noncommutative CD kernel 
associated with the tracial state
$\tau: \complex\langle \Xc_n \rangle \to \complex$ defined as 
\[
\tau(p) \eqdef 2^{-n} \tr[\pi(p)], \quad \forall p \in \complex\langle \Xc_n \rangle,
\]
where
$\pi$ is
the 
$*$-isomorphism defined in
\Cref{eq:representation}.
It can be shown that
\[ \kappa_{\tau, 2r} = \sum_{k=0}^{2r} \sum_{\Pb \in \Pn[k]} \pi^{-1}(\Pb) \otimes \pi^{-1}(\Pb).\]

We see that for any polynomial 
\[q(x) = \sum_{k=0}^{2r} \eta_k \Kc_{k,4}^n,\]
the Choi matrix $C_{q}$ can be expressed in terms of a \emph{perturbed} CD kernel, in the sense of \cite{slotSumofSquaresHierarchiesPolynomial2022}. Defining 
\[
\kappa_{\tau, 2r, \bbeta} \eqdef
\sum_{k=0}^{2r} \eta_k  \sum_{\Pb \in \Pn[k]} \pi^{-1}(\Pb) \otimes \pi^{-1}(\Pb).
\]
we have
\[
C_{q} = 2^{-n} ( \mathrm{T} \otimes \id) (\pi \otimes \pi) \kappa_{\tau, 2r, \bbeta},
\]
where 
$\mathrm{T}$ denotes the transposition map.
\end{remark}

\section{Proof of \Cref{thm:main}}
\label{sec:analysis_of_the_kernel}
In \Cref{thm:ev_conditions}, we showed that having a univariate polynomial 
$u \in \real[x]$
of degree at most
$r$
whose coefficients when expanded in the Krawtchouk basis satisfies certain properties, results in a linear map satisfying 
\Prop{P0}-\Prop{P3}. Hence, such a polynomial can be used to obtain rate of convergence bounds for the SOHS hierarchy over the Pauli algebra. It remains to demonstrate that such a polynomial $u$ exists and determine the exact error $\epsilon(r)$ that can be obtained from the kernel method. From this, \Cref{thm:main}
immediately follows. We conclude this section by discussing possible improvements of our bound in special cases.

In Appendix~\ref{appndx:upper_bound_lemma}, we prove the following lemma, which is analogous to \cite[Lemma~14]{slotSumofsquaresHierarchiesBinary2023}.
\begin{lemma}\label{lem:Anormbound}
    There exists a constant $\gamma_d> 0$ depending only on $d$, such that for any 
    $H=H^\dagger\in \Span(\Pn[\le d])$, 
    \begin{equation}
        \norm{ \Ec_k(H)} \le \gamma_d \norm{H}, \quad \text{for all }k \in \inter{0}{d}.
    \end{equation}
    In particular, $ \gamma_d\le(1+\sqrt2)^d$.
\end{lemma}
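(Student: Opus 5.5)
The plan is to mirror the commutative argument of \cite[Lemma~14]{slotSumofsquaresHierarchiesBinary2023}, replacing the hypercube by the Pauli algebra. The key observation is that $\Ec_k$ restricted to $\Span(\Pn[\le d])$ can be written as a linear combination of "noise" maps that are unital and completely positive, hence contractive in operator norm. For $t\in[0,1]$ consider the single-qubit depolarizing channel
\[
\Delta_t(X) = t X + (1-t)\,\tfrac{\Tr X}{2}\Iq .
\]
It is unital and CPTP, so $\norm{\Delta_t^{\otimes n}(X)}\le\norm{X}$. On a Pauli string it acts by $\Delta_t^{\otimes n}(\Pb)=t^{\wt(\Pb)}\Pb$. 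Hence, for $H\in\Span(\Pn[\le d])$,
\[
\Delta_t^{\otimes n}(H)=\sum_{k=0}^d t^k\,\Ec_k(H),
\]
which is a polynomial of degree $d$ in $t$ with operator coefficients $\Ec_k(H)$.

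The next step extracts each coefficient by interpolation. I would choose $d+1$ points $t_0,\dots,t_d\in[-1,1]$ and write $\Ec_k(H)=\sum_j c_{kj}\,\Delta_{t_j}^{\otimes n}(H)$, where the $c_{kj}$ come from the inverse Vandermonde matrix. Negative $t$ needs care, because $\Delta_t$ with $t<0$ is not completely positive. It is, however, still positive and unital on a single qubit for $t\ge -1/3$, and its tensor power remains contractive as long as each factor is a convex combination of unitary conjugations. Since $\Delta_t=\sum_{\mathrm P} p_{\mathrm P}\,\mathrm P\cdot\mathrm P$ with $p_{\Iq}=(1+3t)/4$ and $p_{\X}=p_{\Y}=p_{\Z}=(1-t)/4$, this requires $t\ge-1/3$. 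Alternatively, I could use only $t\in[0,1]$, which keeps everything completely positive. In either case $\norm{\Ec_k(H)}\le \big(\sum_j|c_{kj}|\big)\norm{H}$, so $\gamma_d:=\max_k\sum_j|c_{kj}|$ depends only on $d$, and that proves the existence part.

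For the explicit bound $(1+\sqrt2)^d$ I would optimize the interpolation nodes, and this is the main obstacle. The quantity $\max_k \sum_j |c_{kj}|$ is governed by coefficient bounds for polynomials bounded on an interval, in the spirit of Markov and Chebyshev. Write $p(t)=\sum_k a_k t^k$ with $|p(t)|\le M$ on an interval; the bound $|a_k|\le \gamma_d M$ should be applied via duality, using norming functionals on $\mathcal B(\complex^{2^n})$, to the scalar polynomial $t\mapsto\langle\omega,\Delta_t^{\otimes n}(H)\rangle$. A classical result (coefficients of $T_d$ for $[-1,1]$) gives $|a_k|\le |{\rm coeff}_k(T_d)|$, and the sum of these coefficients is $T_d(\ib)$ in modulus, which is $\le(1+\sqrt2)^d$. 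So if the admissible interval can be taken to be $[-1,1]$, the stated constant follows directly.

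The obstacle is therefore to justify contractivity on all of $[-1,1]$. For $t=-1$ the map on a Pauli string is $\Pb\mapsto(-1)^{\wt(\Pb)}\Pb$, which for $n$ qubits is not a positive map in general. I would first try to replace single-qubit depolarizing by a map achieving $t\in[-1,1]$ contractively in operator norm, for example a symmetrization that pairs each $t<0$ map with a transpose-type map. Failing that, I would accept the interval $[-1/3,1]$ (or $[0,1]$) and rescale $T_d$ to it, which gives a constant of the form $T_d$ evaluated at an affinely shifted imaginary point, still only depending on $d$; I would then check whether this meets the $(1+\sqrt2)^d$ claim, for instance via the triangle inequality on the Vandermonde inverse with Chebyshev nodes.
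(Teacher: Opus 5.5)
The existence part of your argument is fine: interpolating on $[0,1]$ or $[-1/3,1]$, where $\Delta_t^{\otimes n}$ is mixed-unitary and hence contractive, gives some constant depending only on $d$. The reduction $\Delta_t^{\otimes n}(H)=\sum_{k=0}^d t^k\Ec_k(H)$, followed by testing against states and a Chebyshev coefficient bound, is also the paper's route. The explicit bound $(1+\sqrt2)^d$, however, is not established.

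The obstacle you name rests on a false claim. The map $\Pb\mapsto(-1)^{\wt(\Pb)}\Pb$ \emph{is} positive. On one qubit, $\Delta_{-1}(A)=-A+\Tr(A)\Iq=\Y A^\top\Y$, so $\Delta_{-1}^{\otimes n}(A)=\Y^{\otimes n}A^\top\Y^{\otimes n}$. This is a positive (though not completely positive) operator-norm isometry.

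The missing step is the factorization $\Delta_x=\Delta_{-1}\circ\Delta_{-x}$ for $x\in[-1,0)$. It gives $\norm{\Delta_x^{\otimes n}(A)}=\norm{\Delta_{-x}^{\otimes n}(A)}\le\norm{A}$ for every $x\in[-1,1]$. Complete positivity is never needed, only operator-norm contractivity. This is exactly how the paper obtains $\abs{\sum_{i=0}^d h_i x^i}\le\norm{H}$ on all of $[-1,1]$, after which the Slot--Laurent bound applies.

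Your fallback cannot replace this step. On an interval of length $L$, a degree-$d$ polynomial bounded by $1$ can have leading coefficient as large as $2^{d-1}(2/L)^d$, attained by the rescaled $T_d$. That is $2^{2d-1}$ on $[0,1]$ and $3^d/2$ on $[-1/3,1]$. These exceed $(1+\sqrt2)^d$ for $d\ge 2$ and $d\ge 4$ respectively, so in general a restricted-interval relaxation cannot deliver the stated constant.

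A minor correction: Markov's coefficient bound gives $\abs{a_k}\le\abs{\mathrm{coeff}_k(T_d)}$ only when $d-k$ is even, and $\abs{a_k}\le\abs{\mathrm{coeff}_k(T_{d-1})}$ when $d-k$ is odd. Since $\sum_k\abs{\mathrm{coeff}_k(T_m)}=\abs{T_m(\ib)}\le(1+\sqrt2)^m$, your final bound still holds.
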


In view of \Cref{thm:ev_conditions}, we see from 
\Cref{lem:Anormbound} that if
\begin{equation}
    \label{eq:cond_ev_P2}
     \sum_{k=0}^{d} |\lambda_k^{-1}-1| \le \nu(r),
\end{equation}
then \Prop{P2} holds for
$
\epsilon(r) = \gamma_d \norm{H} \nu(r)
$.     
This observation together with the following result from \cite{slotSumofsquaresHierarchiesBinary2023}, paves the way for the proof of \Cref{thm:main}.

\begin{lemma}[\cite{slotSumofsquaresHierarchiesBinary2023}]
\label{lem:slotSumofsquaresHierarchiesBinary2023}
    \label{lem:classical}
    For
    $r,n \in \natural$, let 
    $\zeta_{n,r}$
    denote the smallest root of the Krawtchouk polynomial 
    $\Kc_{r,4}^n$.
    If 
    \begin{equation*}
        d(d+1) (\zeta_{n,r+1} / n )\leq 1/2,
    \end{equation*}
    then there exists a polynomial 
    $u \in \real[x]$ of degree $r$, where
    \begin{equation*}
        u^2(x) = \sum_{k=0}^{n} \lambda_k \Kc_{k,4}^n (x),
    \end{equation*}
    and $\blambda = (\lambda_0, \dots, \lambda_{n})$
    satisfies the following:
    \begin{enumerate}
        \item $\lambda_0 = 1$.
        \item For all $i \in [d]$,
        $
            1/2 \le \lambda_i \le 1
        $, and in particular, 
        $\lambda_i \neq 0$.
        \item 
        $
        \sum_{k=0}^d |\lambda_k^{-1} - 1| \le 
        2 d(d+1) (\zeta_{n, r+1} / n)
        $.
    \end{enumerate}
\end{lemma}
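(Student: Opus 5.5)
The plan is to follow the argument of \cite{slotSumofsquaresHierarchiesBinary2023}, specialised to $q=4$. It works through the Krawtchouk coefficients of $u^2$ with respect to the measure $\mu$. Write $\langle f,g\rangle_\mu \eqdef 4^{-n}\sum_{x=0}^n 3^x\binom{n}{x} f(x)g(x)$. By \Cref{eq:krawtchouk_orthogonality}, any expansion $u^2=\sum_k\lambda_k\Kc_{k,4}^n$ on $\inter{0}{n}$ has
\[
\lambda_k=\frac{\langle u^2,\Kc_{k,4}^n\rangle_\mu}{3^k\binom{n}{k}}=\Big\langle u^2,\frac{\Kc_{k,4}^n}{\Kc_{k,4}^n(0)}\Big\rangle_\mu ,
\]
using $\Kc_{k,4}^n(0)=3^k\binom nk$. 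So $\lambda_0=\langle u^2,1\rangle_\mu$, and item 1 is just the normalisation $\langle u^2,1\rangle_\mu=1$. Under this normalisation $u^2\,d\mu$ is a probability measure on $\inter{0}{n}$, and $\lambda_k$ is the average of $\Kc_{k,4}^n(x)/\Kc_{k,4}^n(0)$ against it.

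Next I use two pointwise facts about normalised Krawtchouk polynomials on the integers $x\in\inter{0}{n}$.

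First, $|\Kc_{k,4}^n(x)|\le \Kc_{k,4}^n(0)$. This holds because $\Kc_{k,4}^n(x)$ is a sum of characters of $\mathbb{Z}_4^n$ over a Hamming sphere of radius $k$, evaluated at a point of weight $x$. It gives $\lambda_k\le 1$.

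Second, I need a linear lower bound
\[
1-\frac{\Kc_{k,4}^n(x)}{\Kc_{k,4}^n(0)}\le c\,\frac{k\,x}{n}
\]
with an explicit constant $c$, ideally $c=4/3$, which is exact for $k=1$ since $\Kc_{1,4}^n(x)=3n-4x$. I would try to prove this via the symmetry $\Kc_{k,4}^n(x)/\Kc_{k,4}^n(0)=\Kc_{x,4}^n(k)/\Kc_{x,4}^n(0)$ together with a telescoping or derivative estimate. Alternatively, I would use the combinatorial interpretation: $1-\Kc_k(x)/\Kc_k(0)$ is bounded by the union-bound probability that a uniformly random weight-$k$ string meets the $x$-support nontrivially, which is at most $kx/n\cdot(4/3)$ in the relevant normalisation. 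Integrating this bound then gives
\[
1-\lambda_k\le c\,\frac{k}{n}\,\langle u^2,x\rangle_\mu .
\]

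It remains to choose $u$ of degree $r$ minimising $\langle u^2,x\rangle_\mu$ subject to $\langle u^2,1\rangle_\mu=1$. This is the smallest eigenvalue of multiplication by $x$ on polynomials of degree at most $r$ in $L^2(\mu)$. By the standard Gauss-quadrature / Jacobi-matrix fact, this eigenvalue equals the smallest root of the degree-$(r+1)$ orthogonal polynomial, namely $\zeta_{n,r+1}$. The minimiser is $u(x)=\Kc_{r+1,4}^n(x)/(x-\zeta_{n,r+1})$ up to scaling, so it has degree exactly $r$.

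With this choice, $1-\lambda_k\le ck\,\zeta_{n,r+1}/n$. Under the hypothesis $d(d+1)\zeta_{n,r+1}/n\le 1/2$, and provided $c\le 2$, this gives $\lambda_k\ge 1/2$ for all $k\le d$, which is item 2. Then $|\lambda_k^{-1}-1|=(1-\lambda_k)/\lambda_k\le 2ck\,\zeta_{n,r+1}/n$. Summing over $k\in[d]$ yields $c\,d(d+1)\zeta_{n,r+1}/n$, which is at most $2d(d+1)\zeta_{n,r+1}/n$ when $c\le 2$; this is item 3.

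The main obstacle is the pointwise inequality $1-\Kc_{k,4}^n(x)/\Kc_{k,4}^n(0)\le c\,kx/n$ with a constant $c\le 2$, uniformly in $x\in\inter{0}{n}$. This is where the properties of the Hamming scheme are actually used. I would first attempt the probabilistic and character-sum argument sketched above, and fall back on the corresponding estimate in \cite{slotSumofsquaresHierarchiesBinary2023} if that does not give a clean constant.
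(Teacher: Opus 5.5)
Your proposal is correct and follows essentially the argument of the source the paper relies on; the paper gives no proof of its own and simply cites \cite{slotSumofsquaresHierarchiesBinary2023}. The steps are the same: $\lambda_k$ is the $u^2\,d\mu$-average of $\Kc_{k,4}^n/\Kc_{k,4}^n(0)$, a linear lower bound on this normalised Krawtchouk polynomial reduces everything to minimising $\langle u^2,x\rangle_\mu$, and that minimum is $\zeta_{n,r+1}$. The step you flag as the obstacle is already closed by your own union-bound sketch: for fixed $\Pb$ of weight $x$ and uniformly random $\Qb\in\Pn[k]$, $1-\Kc_{k,4}^n(x)/\Kc_{k,4}^n(0)=2\Pr[\eta(\Pb,\Qb)\text{ odd}]\le 2\,\mathbb{E}[\eta(\Pb,\Qb)]=\tfrac{4}{3}\,kx/n$, which gives $c=4/3\le 2$.
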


\begin{proof}[Proof of \Cref{thm:main}]
    Let 
    $u$ be a polynomial whose existence is guaranteed by \Cref{lem:classical}, where
    \begin{equation*}
        u^2(x) = \sum_{k=0}^{n} \lambda_k \Kc_{k,4}^n (x),
    \end{equation*}
    for
    $\blambda = (\lambda_0, \dots, \lambda_{n}) \in \real^{n+1}$.
    Then by \Cref{thm:ev_conditions},
    $\Phi_{\blambda}$ satisfies 
    \Prop{P0}-\Prop{P3}
    with 
    $\epsilon(r) =
    2\gamma_d d(d+1) \norm{H} (\frac{\zeta_{n, r+1}}{n})$. Therefore, from \Cref{prop:kernel_method} we have 
    \begin{equation*}
        H_{\min} - H_r \leq
        2\gamma_d d(d+1) \norm{H} (\zeta_{n, r+1} / n).
    \end{equation*}
\end{proof}

\begin{remark}[Improvements for structured Hamiltonians]
\label{remark:structured_Hamiltonians}
In many cases of interest, e.g. the Heisenberg model~\cite{auerbach1994}, the Hamiltonian $H$ is only composed of Pauli strings of certain weights. Using this knowledge about $H$ we can slightly improve the convergence rates in Theorem~\ref{thm:main}. Let $X\subseteq [0:d]$ and suppose $H$ only contains monomials of degree $k$ if $k \in X$. Then for any $k\not\in X$ we have $\Ec_k(H)=0$, and then
\Cref{eq:normUpperBoundbyLambdas} 
can be modified to
\begin{equation}
    \|\Phi^{-1} (H) - H\| \le \max_{j\in X} \| \Ec_j(H)\| \sum_{k \in X} |\lambda_k^{-1} - 1| \le \gamma_d \norm{H} \sum_{k \in X} |\lambda_k^{-1} - 1|.
\end{equation} This improves our bound in \Cref{eq:cond_ev_P2} by including fewer terms on the left hand side. In particular, \Cref{lem:classical} can be modified (see \cite[Lemma 7]{slotSumofsquaresHierarchiesBinary2023}) to give the tighter bound:
\begin{equation}\label{eq:lambdainvoptimization}
        \sum_{k\in X} |\lambda_k^{-1} - 1| \le 
        4 \left(\sum_{k\in X} k\right) (\zeta_{n, r+1} / n).
\end{equation}
The same assumption also leads to an improvement of our bounds in terms of the value of $\gamma_d$ appearing in \Cref{lem:Anormbound} (see Appendix \ref{sec:proofOfNormbound}). 

The extremal case for these improvements can be seen by considering $H \in \Span(\Pn[d])$. Then it is sufficient to optimize only over $\lambda_d$ both in \Cref{eq:lambdainvoptimization} and \Cref{eq:Noptfinal} (in fact, $\gamma_d=1$ trivially when we only consider one component in the harmonic decomposition), and we get 
\begin{equation}
        H_{\min} - H_r \leq 4 d \norm{H} (\zeta_{n, r + 1} / n).
    \end{equation}
\end{remark}

\section{Conclusion and future work}
In this work, we have established rate of convergence bounds for the sequence of lower approximations produced by the SOHS hierarchy for the ground state energy problem for qubit Hamiltonians. Our approach extends the polynomial kernel method~\cite{fangSumofsquaresHierarchySphere2021}, which has been used in the commutative setting for proving convergence rates of the SOS hierarchy, to the setting of optimizing a noncommutative polynomial over the Pauli algebra. By exploiting the symmetries of families of Pauli diagonal maps, we reduced the main task to finding a univariate polynomial satisfying suitable conditions. Through a suitable choice of polynomial, we then showed that the rate of convergence can be bounded in terms of the smallest root of a family of Krawtchouk polynomials. 

Our work provides the first quantitative rate of convergence bounds for SOHS relaxations of noncommutative polynomial optimization problems. Given the success of the polynomial kernel method for establishing speed of convergence rates for commutative polynomial optimization problems~\cite{laurentOverviewConvergenceRates2026}, it is natural to explore the possibility of further rate of convergence bounds for problems beyond the Pauli algebra. Another direction for the future work is the study of the exactness of the lower estimates 
$(H_r)_{r\in [n]}$ given by the SOHS hierarchy. As we previously noted, the hierarchy is guaranteed to converge at 
$r = n$. However, for the commutative analogue of the ground state energy problem it is known that the convergence can happen at a lower level. More precisely, it is shown in~\cite{sakaue2017exact}
that the SOS hierarchy for polynomial optimization over the binary cube is exact for 
\begin{equation}
\label{eq:commutative_exact_threshold}
    r \ge \frac{n + d - 1}{2}.
\end{equation}
Our asymptotic analysis suggests the possibility of a similar finite convergence at roughly $r \approx 3n/4$, as we see that for any 
$\epsilon > 0$,
\[
H_{\min}^{(n)} - H_{3n/4}^{(n)} \le \epsilon \norm{H^{(n)}},
\]
for all sufficiently large $n$. 
It remains an open problem whether a threshold similar to \Cref{eq:commutative_exact_threshold} holds in the noncommutative setting. Finally, it is worth noting the possibility of improving our bounds by exploiting different structures of the Hamiltonians such as sparsity or symmetry. In \Cref{remark:structured_Hamiltonians} we discussed some immediate improvements of our bounds in special cases. However, a full analysis of possible improvements and the tightness of our bounds would be an interesting avenue for future work.

\paragraph{Note added.} During the final preparation of the manuscript we became aware of a related work by Klep \textit{et al}, titled ``Quantitative semidefinite certificates for ground-state energies of Pauli Hamiltonians''~\cite{klep2026quantitative}, which obtains similar results. The two works were carried out independently.   

\section*{Acknowledgments}
    This project has received funding from the European Union’s Horizon Europe research and innovation programme under the project "Quantum Secure Networks Partnership" (QSNP, grant agreement No 101114043). A.A. acknowledges PhD funding from the doctoral school of Institut Polytechnique de Paris.
    C.R. is supported by France 2030 under the French
National Research Agency award number ANR-22-EXES-0013.

%%------------------------------------------------------------------------------%

% %-----------------------------------------------------------------------------%

\bibliographystyle{quantum}
\bibliography{pauli_paper}

\appendix

\section{Proof of \Cref{lem:Pauli-algebra-representation}}
\label{appndx:pauli_algebra_representation}
\begin{proof}
    For an arbitrary 
$i \in [n]$, 
consider 
$\mathscr{P}_{\circ}
\eqdef 
\mathbb{C}\langle \pauli{x}{i},  \pauli{y}{i},  \pauli{z}{i} \rangle / \mathcal{I}_\circ$, where $\mathcal{I}_{\circ}$ is the two-sided ideal generated by polynomials
\[
\pauli{x}{i}^2 - 1, 
\quad \pauli{y}{i}^2 - 1, 
\quad \pauli{z}{i}^2 - 1, 
\quad \pauli{x}{i} \pauli{y}{i} - \ib\pauli{z}{i},
\quad \pauli{y}{i} \pauli{Z}{i} - \ib\pauli{x}{i}, 
\quad \pauli{z}{i} \pauli{x}{i} - \ib\pauli{y}{i}.
\]
Using these relations, any polynomial in 
$\pauli{x}{i}, \pauli{y}{i}, \pauli{z}{i}$ 
can be reduced to a linear combination of 
$1,\pauli{x}{i},\pauli{y}{i},\pauli{z}{i}$. 
Hence 
$\dim \mathscr{P}_\circ \leq 4$. 
On the other hand, as the set
$\bounded(\complex^2) = \Span\{ \Iq, \X, \Y, \Z \}$,
the $*$-homomorphism 
$\mathscr{P}_\circ \to \bounded(\mathbb{C}^2)$ 
given by 
\[
1\mapsto \Iq,\;  \pauli{x}{i} \mapsto \X,\; \pauli{y}{i} \mapsto \Y,\; \pauli{z}{i} \mapsto \Z,
\]
is surjective, thus
$\dim \mathscr{P}_\circ = 4$. 
This implies that the above $*$-homomorphism is in fact an $*$-isomorphism between 
$\mathscr{P}_\circ$
and
$\bounded(\mathbb{C}^2)$. Consequently, we have an 
$*$-isomorphism between
$\bounded(\mathbb{C}^{2^n})$
and
$\mathscr{P}_{\circ}^{\otimes n}$.

Now consider 
$\Pcc$,
which is the algebra of complex polynomials in variables
$\bigcup_{i=1}^n \{\pauli{x}{i}, \pauli{y}{i}, \pauli{z}{i}\}$
modulo the following equations:
\begin{align*}
&(\pauli{x}{i})^2 = (\pauli{y}{i})^2 = (\pauli{z}{i})^2 = 1, \quad \text{for } i \in [n], \\
        &\pauli{x}{i}\pauli{y}{i} - \ib \pauli{z}{i} =  
        \pauli{y}{i}\pauli{z}{i} - \ib \pauli{x}{i} =
        \pauli{z}{i}\pauli{x}{i} - \ib \pauli{y}{i} = 0, \quad \text{for } i \in [n], \\
        & \pauli{A}{i} \pauli{B}{j} - \pauli{B}{j} \pauli{A}{i} = 0, \quad \text{for } i,j \in [n],\ i\neq j,\ \mathsf{A}, \mathsf{B} \in \{\mathsf{X},\mathsf{Y},\mathsf{Z}\}.
\end{align*}
It similarly follows that 
$\Pcc$ 
is spanned by the set of monomials 
$\mathsf{A}_1 \mathsf{A}_2 \cdots \mathsf{A}_n$,
where 
$\mathsf{A}_i \in \{1, \pauli{x}{i}, \pauli{y}{i}, \pauli{z}{i}\}$. 
Consider the 
$*$-homomorphism 
$\mathscr{P}_{\circ}^{\otimes n} \to \Pcc$ 
given by
\[
\mathsf{A}_1 \otimes \mathsf{A}_2 \otimes \cdots \otimes \mathsf{A}_n \mapsto \mathsf{A}_1 \mathsf{A}_2 \cdots \mathsf{A}_n,
\]
which, by what is shown above, extends to a 
$*$-homomorphism 
$\bounded(\mathbb{C}^{2^n}) \to \Pcc$. 
This latter $*$-homomorphism is surjective, and since 
$\bounded(\mathbb{C}^{2^n})$
has no nontrivial two-sided ideals \cite[Chapter XVII,~Theorem~5.2]{langAlgebra2002}, it is also injective, thus a 
$*$-isomorphism.
To complete the proof, it suffices to use the well-known fact (see e.g. \cite[Chapter XVII,~Corollary~4.6]{langAlgebra2002}) that 
$
\bounded(\mathbb{C}^{2^n})
$
has a unique irreducible representation (up to unitary equivalence). 

\end{proof}

\section{Proofs of \Cref{sec:kernel-const}}
\label{sec:apndx_proofs_kernel_const}
For 
$\mathrm{P}, \mathrm{Q} \in 
\mathcal{P}_{\!1} 
\eqdef 
\{\Iq,\X,\Y,\Z\}$, 
define 
$\eta: 
\mathcal{P}_{\!1} 
\times 
\mathcal{P}_{\!1} 
\to 
\natural_0$ 
as
\begin{equation}
    \label{eq:eta_definition}
\eta(\mathrm{P},\mathrm{Q}) 
\eqdef 
\begin{cases}
1 & \text{if } \mathrm{P}\neq \mathrm{Q} \text{ and } \mathrm{P},\mathrm{Q}\neq \Iq, \\
0 & \text{otherwise}.
\end{cases}
\end{equation}
We extend the definition of 
$\eta$ 
to 
$\Pn \times \Pn$ 
by defining 
\begin{equation}
    \label{eq:eta_extension}
    \eta(\Pb,\Qb) \eqdef \sum_{i=1}^n \eta(\mathrm{P}_i,\mathrm{Q}_i),
\end{equation}
for 
$\Pb = \mathrm{P}_{\!1} \otimes \cdots \otimes \mathrm{P}_{\!n}$ 
and 
$\Qb = \mathrm{Q}_{1} \otimes \cdots \otimes \mathrm{Q}_{n}$.
Note that for 
$\Pb, \Qb \in \Pn$, 
we have 
\begin{equation}
    \label{eq:eta_commutation_relation}
\Pb \Qb = (-1)^{\eta(\Pb,\Qb)} \Qb \Pb.
\end{equation}

An 
$n$-qubit 
\emph{Pauli diagonal map} 
is a linear map 
$\Phi: \bounded(\complex^{2^n}) \to \bounded(\complex^{2^n})$ 
acting on an 
$n$-qubit 
operator 
$X \in \bounded(\complex^{2^n})$ 
as
\begin{equation}
    \label{eq:pauli_diagonal_map}
    \Phi(X) \eqdef \sum_{\Pb \in \Pn} \alpha_{\Pb} \Pb X \Pb,
\end{equation}
for some $\alpha_{\Pb} \in \complex$. The following lemma gives an alternative form for these maps.
\begin{lemma}
    \label{lem:pauli_diagonal_alternative_form}
    A Pauli diagonal map 
    \[
    \Phi(X) = \sum_{\Qb \in \Pn} \alpha_{\Qb} \Qb X \Qb,
    \]
    can equivalently be expressed as
    \begin{equation}
        \label{eq:pauli_diagonal_alternative_form}
        \Phi(X) = 2^{-n} \sum_{\Pb \in \Pn} \beta_{\Pb} \tr(\Pb X) \Pb,
    \end{equation}
    where 
    \begin{equation}
        \label{eq:alpha_beta_relation}
        \beta_{\Pb} = \sum_{\Qb \in \Pn} (-1)^{\eta(\Pb,\Qb)} \alpha_{\Qb}, \quad \text{for } \Pb \in \Pn.
    \end{equation}
\end{lemma}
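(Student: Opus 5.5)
The plan is to expand $X$ in the Pauli basis and compute the action of $\Phi$ on each basis element directly; linearity then gives the claim.

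\textbf{Step 1: Pauli expansion of $X$.} Every $X \in \bounded(\complex^{2^n})$ can be written as
\[
X = 2^{-n}\sum_{\Pb \in \Pn} \tr(\Pb X)\,\Pb .
\]
This holds because $\Pn$ is an orthogonal basis for the Hilbert--Schmidt inner product, with $\tr(\Pb\Qb) = 2^n\delta_{\Pb,\Qb}$.

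\textbf{Step 2: action of $\Phi$ on a Pauli string.} Fix $\Pb \in \Pn$. By the commutation relation \eqref{eq:eta_commutation_relation}, each $\Qb$ satisfies $\Qb\Pb = (-1)^{\eta(\Pb,\Qb)}\Pb\Qb$. Since $\Qb^2 = \I$, this gives
\[
\Qb\Pb\Qb = (-1)^{\eta(\Pb,\Qb)}\,\Pb .
\]
Summing against the weights $\alpha_{\Qb}$ yields
\[
\Phi(\Pb) = \Bigl(\sum_{\Qb} (-1)^{\eta(\Pb,\Qb)}\alpha_{\Qb}\Bigr)\Pb = \beta_{\Pb}\,\Pb .
\]
Here $\eta$ is symmetric in its two arguments, so the order of $\Pb$ and $\Qb$ inside $\eta$ does not matter.

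\textbf{Step 3: assemble by linearity.} Applying $\Phi$ to the expansion from Step 1 and using Step 2 term by term gives
\[
\Phi(X) = 2^{-n}\sum_{\Pb} \beta_{\Pb}\,\tr(\Pb X)\,\Pb ,
\]
which is exactly \eqref{eq:pauli_diagonal_alternative_form}.

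\textbf{Main obstacle.} The only step with real content is justifying the single-qubit sign rule behind \eqref{eq:eta_commutation_relation}. Two single-qubit Paulis anticommute exactly when they are distinct and neither is $\Iq$. For tensor products, the individual signs multiply, giving $(-1)^{\eta(\Pb,\Qb)}$ overall. This can be checked directly from the relations \eqref{eq:pauli_algebra_relations}.
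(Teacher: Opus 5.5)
Your proof is correct and is essentially the paper's argument. The paper expands each $\Qb X \Qb$ in the Pauli basis and uses cyclicity of the trace with \eqref{eq:eta_commutation_relation} to get $\tr(\Qb X \Qb \Pb) = (-1)^{\eta(\Pb,\Qb)} \tr(\Pb X)$; you instead expand $X$ first and use that each $\Pb$ is an eigenvector of $X \mapsto \Qb X \Qb$ with eigenvalue $(-1)^{\eta(\Pb,\Qb)}$, which is the same computation done on the input side rather than the output side.
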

\begin{proof}
    We have 
    \begin{align*}
            \sum_{\Qb \in \Pn} \alpha_{\Qb} \Qb X \Qb 
            & =
            \sum_{\Qb \in \Pn} \alpha_{\Qb} 
            \Bigl(
            2^{-n} 
            \sum_{\Pb \in \Pn} 
            \Tr\!\bigl(
                \Qb X \Qb \Pb
                \bigr) \Pb
            \Bigr)\\ 
            &=
            2^{-n}
            \sum_{\Pb \in \Pn} 
            \bigl( \sum_{\Qb \in \Pn} \alpha_{\Qb} (-1)^{\eta(\Pb,\Qb)}\bigr)
            \tr(\Pb X) \Pb.
    \end{align*}
    where on the first line we used the fact that $\{2^{-n/2}\Pb\}_{\Pb \in \Pc_n}$ form an orthonormal basis for $\Bc(\mathbb{C}^{2^n})$ with respect to the Hilbert-Schmidt inner product and for the second line we used~\Cref{eq:eta_commutation_relation}.
\end{proof}
Recall the two families of Pauli diagonal maps, 
$\{\Ac_k\}_{k \in \inter{0}{n}}$ 
and
$\{\Ec_k\}_{k\in \inter{0}{n}}$,
defined previously in \Cref{eq:Ac_definition} and \Cref{eq:EC_def}:
\begin{equation*}
\begin{aligned}
\Ac_k(X) & \eqdef \sum_{\Qb \in \Pn[k]} \Qb X \Qb, \\
\Ec_k(X) & \eqdef 2^{-n} \sum_{\Pb \in \Pn[k]} \tr(\Pb X) \Pb.
\end{aligned}
\end{equation*}
It can be verified that any two Pauli diagonal maps commute, and in particular, the maps 
$\Ac_k$ 
and 
$\Ec_k$ 
commute for all 
$k \in \inter{0}{n}$.
The next proposition shows that
$\{\Ac_k\}_{k\in \inter{0}{n}}$
and
$\{\Ec_k\}_{k\in \inter{0}{n}}$
are in fact two different bases for a commutative subalgebra of Pauli diagonal maps.
\begin{prop}
    \label{prop:krawtchouk_transform}
    For $k \in \inter{0}{n}$, 
    \begin{align}
        \Ac_k & = \sum_{t=0}^n \mathcal K_{k,4}^n(t)\,\Ec_t,  \\
        \Ec_k & = 4^{-n}\sum_{t=0}^n \mathcal K_{k,4}^n(t)\,\Ac_t, 
    \end{align}
    where 
    $\mathcal K_{k,4}^n$ 
    denotes the 
    $4$-ary 
    Krawtchouk polynomial of degree 
    $k$, defined in \Cref{eq:krawtchouk_definition}. In particular,
    \begin{equation}
        \Span \{ \Ac_0, \Ac_1, \dots, \Ac_n\}  = \Span \{ \Ec_0, \Ec_1, \dots, \Ec_n\}.
    \end{equation}
\end{prop}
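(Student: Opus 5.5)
The plan is to use \Cref{lem:pauli_diagonal_alternative_form} directly. $\Ac_k$ is a Pauli diagonal map with $\alpha_{\Qb} = 1$ if $\wt(\Qb)=k$ and $0$ otherwise. The lemma then gives
\[
\Ac_k(X) = 2^{-n}\sum_{\Pb\in\Pn} \beta^{(k)}_{\Pb}\tr(\Pb X)\Pb,
\qquad
\beta^{(k)}_{\Pb} = \sum_{\Qb\in\Pn[k]} (-1)^{\eta(\Pb,\Qb)}.
\]
The first identity therefore reduces to the combinatorial claim that $\beta^{(k)}_{\Pb} = \Kc_{k,4}^n(\wt(\Pb))$. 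Once this holds, grouping the sum over $\Pb$ by weight $t$ gives exactly $\sum_t \Kc_{k,4}^n(t)\Ec_t$.

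To prove the claim, fix $\Pb$ with $\wt(\Pb)=t$ and let $S$ be its support, so $|S|=t$. I will count $\Qb$ of weight $k$ according to how many positions $i$ they use inside $S$ and how many outside. Suppose $\Qb$ has $j$ non-identity positions in $S$ and $k-j$ outside.
\begin{itemize}
\item Outside $S$, each non-identity $\mathrm{Q}_i$ commutes with $\mathrm{P}_i=\Iq$. This contributes a factor $3^{k-j}\binom{n-t}{k-j}$ with sign $+1$.
\item Inside $S$, at each chosen position, $\mathrm{Q}_i\in\{\X,\Y,\Z\}$ equals $\mathrm{P}_i$ (sign $+1$, one choice) or differs from it (sign $-1$, two choices). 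The per-position signed sum is $1-2=-1$, so the positions in $S$ contribute $\binom{t}{j}(-1)^j$.
\end{itemize}
Summing over $j$ gives $\sum_j (-1)^j 3^{k-j}\binom{t}{j}\binom{n-t}{k-j} = \Kc_{k,4}^n(t)$, matching \Cref{eq:krawtchouk_definition} with $q=4$. This sign bookkeeping is the only step needing care, and it is routine.

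For the second identity, I will substitute the first into the right-hand side:
\[
4^{-n}\sum_t \Kc_{k,4}^n(t)\Ac_t = 4^{-n}\sum_{s}\Bigl(\sum_t \Kc_{k,4}^n(t)\Kc_{t,4}^n(s)\Bigr)\Ec_s .
\]
By Orthogonality relation II (\Cref{eq:krawtchouk_orthogonality_implied}) with $k'=k$ and $k''=s$, the inner sum equals $4^n\delta_{k,s}$, and the right-hand side collapses to $\Ec_k$.

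The span statement then follows immediately: each $\Ac_k$ is a combination of the $\Ec_t$, and each $\Ec_k$ is a combination of the $\Ac_t$. I expect no real obstacle. The main point is getting the index convention of \Cref{eq:krawtchouk_orthogonality_implied} right, i.e., that the sum runs over the degree index paired with the argument index as written.
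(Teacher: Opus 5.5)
Your proposal is correct and follows essentially the same route as the paper. It applies \Cref{lem:pauli_diagonal_alternative_form} to $\Ac_k$, proves $\sum_{\Qb\in\Pn[k]}(-1)^{\eta(\Pb,\Qb)}=\Kc_{k,4}^n(\wt(\Pb))$ (the paper's \Cref{lem:krawtchouk_transform}) by the same inside/outside-support count with per-position sign sum $1-2=-1$, and gets the inverse relation by substituting the first identity and invoking \Cref{eq:krawtchouk_orthogonality_implied} with the index pairing you state.
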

To prove \Cref{prop:krawtchouk_transform}, we need the following lemma.
\begin{lemma}
    \label{lem:krawtchouk_transform}
    For $k,t \in \inter{0}{n}$, 
    \begin{equation}
        \label{eq:krawtchouk_transform_lemma}
        \sum_{\Qb \in \Pn[k]} (-1)^{\eta(\Pb,\Qb)} = \mathcal K_{k,4}^n(t), \quad \text{for any } \Pb \in \Pn[t].
    \end{equation}
\end{lemma}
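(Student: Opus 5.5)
The plan is to factorize the sum over $\Qb \in \Pn[k]$ qubit by qubit and then count configurations combinatorially. Fix $\Pb = \mathrm{P}_1 \otimes \cdots \otimes \mathrm{P}_n \in \Pn[t]$ and let $S \eqdef \{ i : \mathrm{P}_i \neq \Iq\}$, so $|S| = t$. By \Cref{eq:eta_extension}, $(-1)^{\eta(\Pb,\Qb)} = \prod_{i=1}^n (-1)^{\eta(\mathrm{P}_i,\mathrm{Q}_i)}$. The key single-qubit observation comes straight from \Cref{eq:eta_definition}: if $i \notin S$, every choice of $\mathrm{Q}_i$ gives sign $+1$. If $i \in S$, then $\mathrm{Q}_i = \Iq$ or $\mathrm{Q}_i = \mathrm{P}_i$ gives sign $+1$, while each of the two remaining non-identity Paulis gives sign $-1$.

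Next we organize the sum over $\Qb \in \Pn[k]$ by its support $T$, a subset of $[n]$ with $|T| = k$. Inside the support, each coordinate ranges over the three non-identity Paulis. The sum therefore becomes
\[
\sum_{|T|=k} \prod_{i \in T\cap S} \Bigl(\sum_{\mathrm{Q}\neq \Iq} (-1)^{\eta(\mathrm{P}_i,\mathrm{Q})}\Bigr) \prod_{i\in T\setminus S} 3 .
\]
For $i \in S$, the inner sum equals $1 - 1 - 1 = -1$. Grouping supports by $j \eqdef |T \cap S|$, there are $\binom{t}{j}\binom{n-t}{k-j}$ such supports, and each contributes $(-1)^j 3^{k-j}$.

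Summing over $j$ gives exactly $\sum_{j=0}^k (-1)^j 3^{k-j}\binom{t}{j}\binom{n-t}{k-j} = \Kc^n_{k,4}(t)$, which matches \Cref{eq:krawtchouk_definition} with $q = 4$. Terms with $j > t$ or $k - j > n - t$ vanish under the binomial convention, so the sum is well-defined. The only point needing care is the single-qubit sign computation. In particular, one must check that $\mathrm{Q}_i = \mathrm{P}_i$ gives $\eta = 0$, and that the resulting value $-1$ is uniform in $\mathrm{P}_i$. This is what makes the answer depend only on $t$; the rest is routine counting.
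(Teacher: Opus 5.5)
Your proposal is correct and takes essentially the same approach as the paper. Both arguments partition $\Qb \in \Pn[k]$ by how its support overlaps that of $\Pb$, use that each overlapping coordinate contributes $1-1-1=-1$ and each non-overlapping coordinate contributes $3$, and count $\binom{t}{j}\binom{n-t}{k-j}$ supports to recover $\Kc_{k,4}^n(t)$. The only cosmetic difference is that you work with the support set $S$ directly, whereas the paper first assumes without loss of generality that the non-identity factors of $\Pb$ occupy the first $t$ positions.
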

\begin{proof}[Proof of \Cref{lem:krawtchouk_transform}]
    Without loss of generality, assume that $\Pb$ has all its non-identity single-qubit Pauli operators appearing in the first $t$ positions, i.e., $\Pb = \mathrm{P}_1 \otimes \cdots \otimes \mathrm{P}_t \otimes \Iq^{\otimes (n-t)}$ for some $\mathrm{P}_1,\ldots,\mathrm{P}_t \in \{\X,\Y,\Z\}$.
    We rewrite the left-hand side of \Cref{eq:krawtchouk_transform_lemma} as
     \begin{align}
        \sum_{\Qb \in \Pn[k]}
        (-1)^{\eta(\Pb,\Qb)} 
        & = 
        \sum_{\Qb \in \Pn[k]} 
        \prod_{j=1}^n 
        (-1)^{\eta(\mathrm{P}_j,\mathrm{Q}_j)}
     \end{align}
     Now let us partition the sum over 
     $\Qb \in \Pn[k]$ 
     according to the number of positions at which 
     $\Qb$ 
     and 
     $\Pb$ are both non-identity. Let $i$ index the number of such positions, then we are left to sum over the choice of positions $\{m_1, \dots, m_i\} \subseteq [t]$ and $\{m_1', \dots, m_{k-i}'\} \subseteq [t+1:n]$ where these non-trivial overlaps occur. Finally we sum over the choice of Pauli operators on these nontrivial overlaps, overall this expansion leads to the summation
     \begin{equation}
        \sum_{\Qb \in \Pn[k]}
        (-1)^{\eta(\Pb,\Qb)} =
        \sum_{i = 0}^{k} 
        \sum_{
            \substack{                
                \{m_1,\ldots,m_i\} \subseteq [t] \\
                \{m'_1,\ldots,m'_{k-i}\} \subseteq \inter{t+1}{n}
            }
        }
        \sum_{
            \substack{\mathrm{Q}_{m_1},\ldots,\mathrm{Q}_{m_i} \in \{\X,\Y,\Z\} \\ 
            \mathrm{Q}_{m'_1},\ldots,\mathrm{Q}_{m'_{k-i}} \in \{\X,\Y,\Z\}}
        }
        \prod_{u=1}^i 
        (-1)^{\eta(\mathrm{P}_{m_u},\mathrm{Q}_{m_u})} 
        \prod_{v=1}^{k-i} 
        (-1)^{
            \eta(\Iq,\mathrm{Q}_{m'_{v}})
            }\,.
     \end{equation}

    Noting that $(-1)^{\eta(\Iq,\mathrm{Q}_{m'_{v}})} = 1$ always, we have
     \begin{align*}
        \sum_{\Qb \in \Pn[k]} (-1)^{\eta(\Pb,\Qb)} 
        & = 
        \sum_{i = 0}^{k} 
        \sum_{             
                \{m_1,\ldots,m_i\} \subseteq [t] 
            }
        \sum_{
            \mathrm{Q}_{m_1},\ldots,\mathrm{Q}_{m_i} \in \{\X,\Y,\Z\} 
        }
        \prod_{u=1}^i (-1)^{\eta(\mathrm{P}_{m_u},\mathrm{Q}_{m_u})} 
        \left(
        \sum_{
            \substack{
            m'_1,\ldots,m'_{k-i} \subseteq \inter{t+1}{n}\\
            \mathrm{Q}_{m'_{1}},\ldots,\mathrm{Q}_{m'_{k-i}} \in \{\X,\Y,\Z\}
            }
        }
        1 
        \right)
        \\
        & =
        \sum_{i = 0}^{k} 
        \sum_{             
            \{m_1,\ldots,m_i\} \subseteq [t]
        }
        \sum_{
            \mathrm{Q}_{m_1},\ldots,\mathrm{Q}_{m_i} \in \{\X,\Y,\Z\} 
        } 
        \prod_{u=1}^i (-1)^{\eta(\mathrm{P}_{m_u},\mathrm{Q}_{m_u})} 
        3^{k-i}\binom{n-t}{k-i} \\
        & =
        \sum_{i = 0}^{k}
        3^{k-i}\binom{n-t}{k-i}
        \sum_{
            \{m_1,\ldots,m_i\} \subseteq [t]
        }
        \prod_{u=1}^i 
        \left(
            \sum_{
                \mathrm{Q}_{m_u} \in \{\X,\Y,\Z\}
            }
            (-1)^{\eta(\mathrm{P}_{m_u},\mathrm{Q}_{m_u})}
        \right)\\
        & = 
        \sum_{i = 0}^{k}
        3^{k-i}\binom{n-t}{k-i}
            \sum_{
                \{m_1,\ldots,m_i\} \subseteq [t]
            }
                \prod_{u=1}^i
                \left(
                    (-1) + (-1) + 1
                \right) \label{eq:mones_sum} \\
        & =
        \sum_{i = 0}^{k}
        (-1)^i 3^{k-i}\binom{t}{i}\binom{n-t}{k-i} \\
        & = \mathcal K_{k,4}^n(t),
     \end{align*}
     where the fourth line follows from the fact that
     exactly one 
     $\mathrm{P}_{m_u} \in \{\X,\Y,\Z\}$
     commutes with 
    $\mathrm{Q}_{m_u} \neq \Iq$,
    and the other two anticommute.
\end{proof}
\begin{proof}[Proof of \Cref{prop:krawtchouk_transform}]
We first prove \Cref{eq:krawtchouk_transform_1}. For $X \in \bounded(\complex^{2^n})$, we have
\begin{align*}
\Ac_k(X) 
&\;= \sum_{\Qb \in \Pn[k]} \Qb X \Qb \\
&\overset{\tikzrefsize{\textsc{Lem}~\ref{lem:pauli_diagonal_alternative_form}}}{=} 
2^{-n} 
\sum_{\Pb \in \Pn} \Bigl(\sum_{\Qb \in \Pn[k]} (-1)^{\eta(\Pb,\Qb)}\Bigr) \tr(\Pb X) \Pb\\
&\;= 2^{-n}
\sum_{t=0}^n
\sum_{\Pb \in \Pn[t]} 
\left(\sum_{\Qb \in \Pn[k]} 
(-1)^{\eta(\Pb,\Qb)}\right)
\tr(\Pb X) \Pb \\
&\overset{\tikzrefsize{\textsc{Lem}~\ref{lem:krawtchouk_transform}}}{=} 
\sum_{t=0}^n
\mathcal K_{k,4}^n(t)
2^{-n} 
\sum_{\Pb \in \Pn[t]}
\tr(\Pb X) \Pb \\
&\;= 
\sum_{t=0}^n \mathcal K_{k,4}^n(t)\,\Ec_t(X).
\end{align*}
We now prove \Cref{eq:krawtchouk_transform_2}. From \Cref{eq:krawtchouk_transform_1}, proven above, we have
\begin{align*}
    \sum_{t=0}^n \mathcal K_{k,4}^n(t)\,\Ac_t
&=
\sum_{t=0}^n \mathcal K_{k,4}^n(t) 
\left(\sum_{t'=0}^n \Kc_{t,4}^n(t') \Ec_{t'}\right) \\
&= \sum_{t'=0}^n 
\left( \sum_{t=0}^n \mathcal K_{k,4}^n(t) \Kc_{t,4}^n(t') \right) 
\Ec_{t'} \\
&=
\sum_{t'=0}^n 4^n \delta_{t',k} \Ec_{t'} \\
&= 4^n \Ec_k\,,
\end{align*}
where on the penultimate line we used the orthogonality relation for Krawtchouk polynomials from~\Cref{eq:krawtchouk_orthogonality_implied}.
\end{proof}
\begin{prop}
    \label{prop:choi_PDMaps}
    For $k \in \inter{0}{n}$, the Choi matrices of $\Ac_k$ and $\Ec_k$ are given by
    \begin{align*}
        J_{\Ac_k} & = \sum_{\Pb \in \Pn[k]} \vec{P} \cev{P}, \\
        J_{\Ec_k} & = 2^{-n} \sum_{\Pb \in \Pn[k]} \Pb^\top \otimes \Pb.
    \end{align*}
\end{prop}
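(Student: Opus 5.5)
The proof is a direct computation from the definition of the Choi matrix in \Cref{eq:choi_definition}, treating the two families separately. In both cases we expand $J_\Phi = \sum_{i,j}\ketbra{i}{j}\otimes\Phi(\ketbra{i}{j})$ by linearity over the Pauli strings of weight $k$. We then recognise the resulting sum as either a sum of rank-one projections onto Choi states or a sum of tensor products of Paulis.

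For $\Ac_k$, we substitute $\Ac_k(\ketbra{i}{j}) = \sum_{\Pb\in\Pn[k]} \Pb\ketbra{i}{j}\Pb$, using that Pauli strings are Hermitian. This gives
\[
J_{\Ac_k} = \sum_{\Pb\in\Pn[k]} \sum_{i,j} \ketbra{i}{j}\otimes \Pb\ketbra{i}{j}\Pb^\dagger = \sum_{\Pb\in\Pn[k]} \Bigl(\sum_i \ket{i}\otimes\Pb\ket{i}\Bigr)\Bigl(\sum_j \bra{j}\otimes\bra{j}\Pb^\dagger\Bigr).
\]
By \Cref{eq:vectorization_definition}, each summand is exactly $\vec{\Pb}\cev{\Pb}$.

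For $\Ec_k$, we substitute $\Ec_k(\ketbra{i}{j}) = 2^{-n}\sum_{\Pb\in\Pn[k]} \bra{j}\Pb\ket{i}\,\Pb$. This gives
\[
J_{\Ec_k} = 2^{-n}\sum_{\Pb\in\Pn[k]}\Bigl(\sum_{i,j}\bra{j}\Pb\ket{i}\ketbra{i}{j}\Bigr)\otimes\Pb.
\]
The inner sum is $\sum_{i,j}(\Pb^\top)_{ij}\ketbra{i}{j} = \Pb^\top$, since $(\Pb^\top)_{ij} = \Pb_{ji} = \bra{j}\Pb\ket{i}$. This yields the claimed formula.

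The only point needing care is the index bookkeeping in the transpose and the use of $\Pb^\dagger = \Pb$ in the first computation; there is no real obstacle. As a sanity check, the result is consistent with \Cref{prop:krawtchouk_transform}. The Choi state of a Pauli string satisfies $\vec{\Pb}\cev{\Pb} = (\I\otimes\Pb)\vec{\I}\cev{\I}(\I\otimes\Pb)$, and $\vec{\I}\cev{\I}$ expands in the Pauli basis as $2^{-n}\sum_{\Qb}\Qb^\top\otimes\Qb$. One could alternatively derive one formula from the other via the Krawtchouk transform, but the direct computation is simpler and is the route we take.
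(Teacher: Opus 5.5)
Your proposal is correct and matches the paper's proof essentially line by line. For $\Ac_k$ you expand the Choi matrix and factor each summand as $\vec{\Pb}\cev{\Pb}$, using $\Pb^\dagger=\Pb$; for $\Ec_k$ you identify $\sum_{i,j}\bra{j}\Pb\ket{i}\ketbra{i}{j}=\Pb^\top$. Your closing consistency check via $\vec{\I}\cev{\I}$ is also correct but not needed.
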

\begin{proof}
    By the definition of the Choi matrix and the map 
    $\Ac_k$,
    we have
    \begin{align*}
        \choi{\Ac_k} 
        & =
         \sum_{i,j \in \{0,1\}^n} \ketbra{i}{j} \otimes \Ac_k(\ketbra{i}{j}) \\
        & =
         \sum_{i,j \in \{0,1\}^n} \ketbra{i}{j} \otimes \Bigl(\sum_{\Pb \in \Pn[k]} \Pb \ketbra{i}{j} \Pb\Bigr)\\
        & = \sum_{\Pb \in \Pn[k]} \left(
            \sum_{i \in \{0,1\}^n} \ket{i} \otimes \Pb \ket{i} 
        \right)
        \left(
            \sum_{j \in \{0,1\}^n} \bra{j} \otimes \bra{j} \Pb  
        \right) \\
        & =
         \sum_{\Pb \in \Pn[k]} \vec{P} \cev{P}.
    \end{align*}
    Similarly,
    \begin{align*}
        \choi{\Ec_k}
        & =
            \sum_{i,j \in \{0,1\}^n} \ketbra{i}{j} \otimes \Ec_k(\ketbra{i}{j}) \\
        & =
            \sum_{i,j \in \{0,1\}^n} \ketbra{i}{j} \otimes \Bigl(2^{-n} \sum_{\Pb \in \Pn[k]} \tr(\Pb \ketbra{i}{j}) \Pb\Bigr) \\
        & = 2^{-n} \sum_{\Pb \in \Pn[k]} 
        \left(
            \sum_{i,j \in \{0,1\}^n} 
        \bra{j} \Pb \ket{i}
        \ketbra{i}{j}
        \right) \otimes \Pb \\
        & = 2^{-n} \sum_{\Pb \in \Pn[k]} \Pb^\top \otimes \Pb.
    \end{align*}
\end{proof}
\begin{prop}
\label{prop:orthogonal_projection}
    For 
    $ k \in \inter{0}{n}$,
    let
    $
    \widehat{J}_{\Ac_k}
    $
    be the operators defined in 
    \Cref{eq:normalized_choi_Ac}.
    The family of operators
    $\{\widehat{J}_{\Ac_k}\}_{k \in \inter{0}{n}}$
    satisfy
    \[
    \widehat{J}_{\Ac_k}^\dagger = \widehat{J}_{\Ac_k}, 
    \qquad 
    \widehat{J}_{\Ac_{k'}}\widehat{J}_{\Ac_k} = 
    \delta_{k,k'}\widehat{J}_{\Ac_k}\,,
    \qquad
    \sum_{k=0}^n \widehat{J}_{\Ac_k} = \I \otimes \I.
    \]
\end{prop}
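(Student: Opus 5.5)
The whole statement follows from one inner-product identity. By \Cref{prop:choi_PDMaps}, $\widehat{J}_{\Ac_k} = 2^{-n}\sum_{\Pb \in \Pn[k]} \vec{\Pb}\cev{\Pb}$. So it suffices to show that the Choi states of Pauli strings are mutually orthogonal with common norm, namely
\[
\cevvec{\Pb}{\Qb} = 2^n \delta_{\Pb,\Qb} \qquad \text{for all } \Pb,\Qb \in \Pn .
\]
Given this, $\{2^{-n/2}\vec{\Pb}\}_{\Pb \in \Pn}$ is an orthonormal family in $\complex^{2^n}\otimes\complex^{2^n}$. It has $4^n = \dim(\complex^{2^n}\otimes\complex^{2^n})$ elements, so it is an orthonormal basis. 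Each $\widehat{J}_{\Ac_k}$ is then the orthogonal projector onto the span of the basis vectors indexed by $\Pn[k]$, and all three claims follow from this.

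To prove the identity, I would use the definition \eqref{eq:vectorization_definition}:
\[
\cevvec{\Pb}{\Qb} = \sum_{i,j} \bigl(\langle i|\otimes \langle i|\Pb^\dagger\bigr)\bigl(|j\rangle \otimes \Qb|j\rangle\bigr) = \sum_i \langle i|\Pb^\dagger \Qb|i\rangle = \tr(\Pb^\dagger \Qb).
\]
Pauli strings are Hermitian, so this equals $\tr(\Pb\Qb)$. It factorizes over qubits as $\prod_{j=1}^n \tr(\mathrm{P}_j\mathrm{Q}_j)$. For single-qubit Paulis, $\tr(\mathrm{P}\mathrm{Q}) = 2\delta_{\mathrm{P},\mathrm{Q}}$: two distinct non-identity Paulis multiply to $\pm\ib$ times the third, which is traceless, and $\mathrm{P}\cdot\Iq = \mathrm{P}$ is traceless when $\mathrm{P} \neq \Iq$. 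Taking the product over the $n$ qubits gives $2^n\delta_{\Pb,\Qb}$.

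The three properties then follow directly.
\begin{itemize}
\item \textbf{Hermiticity.} Each $\vec{\Pb}\cev{\Pb}$ is Hermitian, so $\widehat{J}_{\Ac_k}$ is too.
\item \textbf{Orthogonality and idempotence.} Expanding the product gives
\[
\widehat{J}_{\Ac_{k'}}\widehat{J}_{\Ac_k} = 4^{-n}\sum_{\Pb'\in\Pn[k']}\sum_{\Pb\in\Pn[k]} \vec{\Pb'}\,\cevvec{\Pb'}{\Pb}\,\cev{\Pb} = 2^{-n}\delta_{k,k'}\sum_{\Pb\in\Pn[k]}\vec{\Pb}\cev{\Pb} = \delta_{k,k'}\widehat{J}_{\Ac_k},
\]
because $\cevvec{\Pb'}{\Pb}$ vanishes unless $\Pb' = \Pb$, and that can only happen when $k = k'$.
\item \textbf{Resolution of the identity.} We have
\[
\sum_{k=0}^n \widehat{J}_{\Ac_k} = \sum_{\Pb \in \Pn} \bigl(2^{-n/2}\vec{\Pb}\bigr)\bigl(2^{-n/2}\cev{\Pb}\bigr).
\]
This is the sum of rank-one projectors over an orthonormal basis, so it equals $\I\otimes\I$.
\end{itemize}

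The last step is the only one that needs more than orthogonality: it relies on the dimension count $|\Pn| = 4^n$ to conclude that the family is a basis. None of the steps is a serious obstacle.
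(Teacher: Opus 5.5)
Your proof is correct. The Hermiticity and orthogonality parts coincide with the paper's argument. Both rest on $\cevvec{\Pb}{\Qb} = \tr(\Pb\Qb) = 2^n\delta_{\Pb,\Qb}$ and the same expansion of the product $\widehat{J}_{\Ac_{k'}}\widehat{J}_{\Ac_k}$.

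The completeness relation is where you take a different route. The paper makes no dimension count there. It applies the Choi-level Krawtchouk transform \eqref{eq:krawtchouk_transform_choi_1} together with the identity $\sum_{k=0}^n \Kc_{k,4}^n(t) = 4^n\delta_{0,t}$ from \Cref{eq:krawtchouk_orthogonality_special}. This gives $\sum_k \choi{\Ac_k} = 4^n\choi{\Ec_0} = 2^n(\I\otimes\I)$.

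\textbf{What your route buys.} You need only the inner-product identity plus $|\Pn| = 4^n = \dim(\complex^{2^n}\otimes\complex^{2^n})$. So your argument is self-contained and independent of \Cref{prop:krawtchouk_transform} and of the Krawtchouk identities the paper imports from the literature. It also identifies each $\widehat{J}_{\Ac_k}$ explicitly as the projector onto $\Span\{\vec{\Pb} : \Pb\in\Pn[k]\}$, of rank $3^k\binom{n}{k}$.

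\textbf{What the paper's route buys.} It reuses machinery the paper needs anyway for \Cref{prop:kernel_operator}. It also shows that the completeness relation is just the map identity $\sum_k\Ac_k = 4^n\Ec_0$. That is, the full Pauli twirl $X\mapsto\sum_{\Pb\in\Pn}\Pb X\Pb$ equals the completely depolarizing map $X\mapsto 2^n\tr(X)\I$. Your argument gives this twirl identity as a corollary rather than going through it.
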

\begin{proof}
    From \Cref{prop:choi_PDMaps}, we know 
    $
    J_{\Ac_k} = \sum_{\Pb \in \Pn[k]} \vec{P} \cev{P}
    $, which immediately implies the Hermiticity of 
    $\widehat{J}_{\Ac_k}= 2^{-n} J_{\Ac_k}$.
    Furthermore, from \Cref{eq:krawtchouk_transform_choi_1}, we have
    
    \begin{align*}
    \sum_{k=0}^n \choi{\Ac_k} 
    & =
    \sum_{k=0}^n \sum_{t=0}^n \mathcal K_{k,4}^n(t)\,\choi{\Ec_t} = \sum_{t=0}^n {\choi{\Ec_t}} \left(\sum_{k=0}^n \mathcal K_{k,4}^n(t)\right).
\end{align*}
Using 
\Cref{eq:krawtchouk_orthogonality_special}, we get
\begin{align*}
    \sum_{k=0}^n \choi{\Ac_k}=
    \sum_{t=0}^n 4^n {\choi{\Ec_t}} \delta_{t,0} = 4^n \choi{\Ec_0}.
\end{align*}
By \Cref{prop:choi_PDMaps}, $\choi{\Ec_0} = 2^{-n}(\I \otimes \I)$,
which implies that 
\[
\sum_{k=0}^n \widehat{J}_{\Ac_k} = 2^{-n} \sum_{k=0}^n \choi{\Ac_k} = \I \otimes \I.
\]
To show orthogonality, first observe that 
for any 
$\Pb, \Qb \in \Pn$,
\begin{align*}
    \cevvec{\Pb}{\Qb} =
    \left( 
    \sum_{i \in \{0,1\}^n} \bra{i} \otimes \bra{i} \Pb
    \right)
    \left( 
    \sum_{j \in \{0,1\}^n} \ket{j} \otimes \Qb \ket{j} 
    \right) 
    = \sum_{i \in \{0,1\}^n} \langle i | \Pb \Qb | i \rangle 
    = \tr(\Pb \Qb)
    = 2^n \delta_{\Pb, \Qb}.
\end{align*}
This implies
\begin{align*}
    \widehat{J}_{\Ac_{k'}}\widehat{J}_{\Ac_k} &= 
    4^{-n} \choi{\Ac_{k'}} \choi{\Ac_{k}}\\
    &=
    4^{-n} \sum_{\Pb \in \Pn[k']} \vec{\Pb}\cev{\Pb} \sum_{\Qb \in \Pn[k]} \vec{\Qb}\cev{\Qb} \\
    &=
    4^{-n} \sum_{\Pb \in \Pn[k']} \sum_{\Qb \in \Pn[k]} \cev{\Pb}\vec{\Qb} \vec{\Pb} \cev{\Qb} \\
    &=
    4^{-n} \sum_{\Pb \in \Pn[k']} \sum_{\Qb \in \Pn[k]} 2^n \delta_{\Pb, \Qb} \vec{\Pb} \cev{\Qb}\\
    &=
    2^{-n} \delta_{k,k'} \choi{\Ac_k}\\
    &= \delta_{k,k'} \widehat{J}_{\Ac_{k}}.
\end{align*}
\end{proof}
\section{Proof of \Cref{lem:Anormbound}}\label{sec:proofOfNormbound}
\label{appndx:upper_bound_lemma}
\begin{proof}
Notice that for self-adjoint $H=\sum_{\Pb\in\mathcal{P}_{n,\le d}}h_{\Pb} \Pb\in \Span(\mathcal{P}_{n,\le d})$, we have $H=\sum_{i=0}^d \Ec_i(H)$.
We define the optimizations
\begin{equation}\label{eq:optN}
\begin{split}
    N(n,d,k)&\eqdef\sup_{H=H^\dagger \in\Span(\mathcal{P}_{n,\le d})}\{\norm{\Ec_k(H)}: \quad \norm{H}\le 1 \}\\
&=\sup_{\norm{\psi}=1}\quad\sup_{H=H^\dagger\in\Span(\mathcal{P}_{n,\le d})}\{\lvert\bra{\psi}\Ec_k(H)\ket{\psi}\rvert: \quad \norm{H}\le 1 \}
\end{split}
\end{equation}
 and 
\begin{equation}
    N(n,d)\eqdef\sup_{k} N(n,d,k)\,.
\end{equation}
Our aim is to prove that for any $n,d$ we have $N(n,d)\le \gamma_d$ for some $d$-dependent (and most importantly $n$-independent) constant $\gamma_d$. To do so, first we write for the objective function that
\begin{equation}\label{eq:Akobjective}
\begin{split}
   \lvert\bra{\psi}\Ec_k(H)\ket{\psi}\rvert
&=\left\lvert\sum_{\substack{\Pb\in\Pn[k]}}h_{\Pb}
   \bra{\psi}\Pb\ket{\psi}\right\rvert
   \eqdef\lvert h_k\rvert.
\end{split}
\end{equation}
We then look to relax the constraint $\|H\| \le 1$. For $x\in [-1,1]$, let 
\begin{equation}
    \Delta_x(A)\eqdef xA + \frac{(1-x)}{2}\Tr (A)\Iq    
\end{equation}
be the one-qubit depolarizing map.
Note that this is a unital Pauli diagonal map sending any non-identity Pauli $\sigma$ to $x\sigma$ and 
$\norm{\Delta_x}_{\infty\to\infty}=1$. To see the latter, note that the depolarization channel is completely positive for $x\ge 0$ (therefore by unitality $\norm{\Delta_x}_{\infty\to\infty}=1$), while for $x<0$ we write
\begin{equation}
    \Delta_{x}(A)=\Delta_{-1}\left(\Delta_{-x}(A)\right)
\end{equation}
and note that
\begin{equation}
    \Delta_{-1}(A)= \Y A^\top\Y
\end{equation}
is an operator norm isometry, and so is $\Delta_{-1}^{\otimes n}$.
Then, assuming $x< 0$ we have
\begin{equation}
    \norm{\Delta_{x}^{\otimes n}(A)}=\norm{\Y^{\otimes n} \Delta_{-x}^{\otimes n}(A)^\top\Y^{\otimes n}}=\norm{\Delta_{-x}^{\otimes n}\left( A\right)}\le
    \norm{\Delta_{-x}^{\otimes n}}_{\infty\to\infty}\norm{A}=\norm{A}
\end{equation}
for any $n$-qubit operator $A$. Equality is achieved by the identity operator.

Then for any unit vector $\ket{\psi}$ the map $\bra{\psi} \Delta^{\otimes n}_x(\cdot) \ket{\psi}$ also has norm $\norm{\bra{\psi} \Delta^{\otimes n}_x(\cdot) \ket{\psi}}_{\infty\to\infty}=1$, so we can relax the constraint of the optimization in the second line of \Cref{eq:optN} into $\lvert
    \sum_{i=0}^d h_i x^i
    \rvert \le 1$ for all $x \in [-1,1]$ as
\begin{equation}
    \|H\|\ge\abs{\bra{\psi} \Delta_x^{\otimes n}(H) \ket{\psi}}
    = \left\lvert\sum_{\Pb\in\mathcal{P}_n}h_{\Pb} x^{\wt(\Pb)}\bra{\psi}\Pb\ket{\psi}
    \right\rvert
    =
    \left\lvert
    \sum_{i=0}^d \left(
    \sum_{\substack{\Pb\in\Pn[i]}}
    h_{\Pb}
    \bra{\psi}\Pb\ket{\psi}
    \right)
    x^i
    \right\rvert
    =
    \left\lvert
    \sum_{i=0}^d h_i x^i
    \right\rvert
    .
\end{equation}
This results in the upper bound
\begin{equation}\label{eq:Noptfinal}
     N(n,d,k)\le
     \sup_{\{h_i\}_{i\in[0:d]}}\left\{\lvert h_k\rvert: \quad \left\lvert\sum_{i=0}^d h_ix^i\right\rvert \le 1 \quad \forall x\in[-1,1] \right\},
\end{equation}
where on the right hand side we see the same $n$-independent optimization problem as in \cite[Equation 45]{slotSumofsquaresHierarchiesBinary2023}. Then optimizing over $k$, we get an $n$-independent bound for $N(n,d)$. These optimizations were performed and the optimal values of $\gamma_d$ were also determined in \cite{slotSumofsquaresHierarchiesBinary2023}. They provide the solution in terms of the elements of Chebyshev polynomials, in particular, they give the estimate
\begin{equation}
    \gamma_d\coloneqq N(n,d) \le(1+\sqrt2)^d.
\end{equation}

\end{proof}

\end{document}